\newtheorem{lemma}{Lemma}
\newtheorem{observation}{Observation}
\newtheorem{theorem}{Theorem}
\newtheorem{definition}{Definition}
\newtheorem{remark}{Remark}
\newcommand{\E}{\mathbb{E}}
\DeclareMathOperator{\rank}{rank}
\DeclareMathOperator{\pos}{pos}
\DeclareMathOperator{\disl}{disl}
\DeclareMathOperator{\score}{score}
\newcommand{\merge}{\texttt{Merge}\xspace}
\newcommand{\windowsort}{\texttt{WindowSort}\xspace}
\newcommand{\rifflesort}{\texttt{RiffleSort}\xspace}
\title{Optimal Sorting with Persistent Comparison Errors\thanks{Research supported by SNF (project number 200021\_165524).}}
\author{Barbara Geissmann}
\author{Stefano Leucci}
\author{Chih-Hung Liu}
\author{Paolo Penna}
\affil{Department of Computer Science, ETH Zürich. 
\makebox[\textwidth][c]{\texttt{\{barbara.geissmann,stefano.leucci,chih-hung.liu,paolo.penna\}@inf.ethz.ch}}}
\date{}
\begin{document}

\maketitle
\pagestyle{empty}

\begin{abstract}
	We consider the problem of  sorting $n$ elements in the case of \emph{persistent} comparison errors. In this model (Braverman and Mossel, SODA'08), each comparison between  two elements can be wrong with some fixed (small) probability $p$, and \emph{comparisons cannot be repeated}. 
	Sorting perfectly in this model is impossible, and the objective is to minimize the \emph{dislocation} of each element in the output sequence, that is, the difference between its true rank and its position. Existing lower bounds for this problem show that no algorithm can guarantee, with high probability, \emph{maximum dislocation} and \emph{total dislocation} better than $\Omega(\log n)$ and $\Omega(n)$, respectively, regardless of its running time.
	
	In this paper, we present the first \emph{$O(n\log n)$-time} sorting algorithm that guarantees both \emph{$O(\log n)$ maximum dislocation} and \emph{$O(n)$ total dislocation} with high probability. Besides improving over the previous state-of-the art algorithms -- the best known algorithm had running time $\tilde{O}(n^{3/2})$ --  our result indicates that comparison errors do not make the problem computationally more difficult: a sequence with the best possible dislocation can be obtained in $O(n\log n)$ time and, even without comparison errors, $\Omega(n\log n)$ time is necessary to guarantee such dislocation bounds.

	In order to achieve this optimal result, we  solve two sub-problems, and the respective methods have their own merits for further application.
	One is how to locate a position in which to insert an element in an almost-sorted sequence having $O(\log n)$ maximum dislocation in such a way that the dislocation of the resulting sequence will still be $O(\log n)$.

	The other is how to simultaneously insert $m$ elements into an almost sorted sequence of $m$ different elements, such that the resulting sequence of $2m$ elements remains almost sorted. 
\end{abstract}

\clearpage
\pagestyle{plain}
\setcounter{page}{1}

\section{Introduction}\label{sec-introduction}
\newcommand{\whp}{w.h.p.}
\newcommand{\expe}{exp.}

We study the problem of \emph{sorting} $n$ distinct elements under \emph{persistent} random comparison \emph{errors}. This  problem arises naturally when sorting is applied to real life scenarios. For example, one could use experts to compare items, with each comparison being performed by one expert. As these operations are typically expensive, one cannot repeat them, and the result may sometimes be erroneous. Still, one would like to reconstruct from these information the correct (or a nearly correct) order of the elements.

In this classical model, which \emph{does not allow resampling}, 
each comparison is wrong with some fixed (small)  probability $p$, and correct with probability $1-p$.\footnote{As in previous works, we assume $p<1/32$ though the results holds for $p<1/16$.} The comparison errors are independent over all possible pairs of elements, but they are persistent: Repeating the same comparison several times is useless since the result is always the same, i.e., always wrong or always correct.

Because of errors, it is impossible to sort correctly and therefore, one seeks to return a ``nearly sorted'' sequence, that is, a sequence where the elements are ``close'' to their correct positions.
To measure the quality of an output sequence in terms of sortedness, a common way is to consider the \emph{dislocation} of an element, which is the difference between its position in the output and its position in the correctly sorted sequence. In particular, one can consider the  \emph{maximum dislocation} of any element in the sequence or the \emph{total dislocation} of the sequence, i.e., the sum of the dislocations of all $n$ elements. 

Note that sorting with persistent errors as above is much more difficult than the case in which comparisons can be repeated, where a trivial $O(n\log^2 n)$ time solution is enough to sort perfectly with high probability (simply repeat each comparison $O(\log n)$ times and take the majority of the results). Instead, in the model with persistent errors, it is impossible to sort perfectly as no algorithm can achieve a maximum dislocation that is smaller than $\Omega(\log n)$ w.h.p., or total dislocation smaller than $\Omega(n)$ in expectation \cite{geissmann_et_al}.
Such a problem has been extensively studied in the literature, and several algorithms have been devised with the goal of sorting \emph{quickly} with small dislocation (see Table~\ref{tb-recurrent}). 
Unfortunately, even though all the algorithms achieve the best possible maximum dislocation of $\Theta(\log n)$, they 
use a truly superlinear number of comparisons (specifically, $\Omega(n^{c})$ with $c\geq 1.5$), and/or require significant amount of time (namely, $O(n^{3+c})$ where $c$ is a big constant that depends on $p$). 
This naturally suggests the following question:
\begin{quote}
	\emph{What is the time complexity of sorting optimally with persistent errors?}
\end{quote}

\noindent In this work, we answer this basic question by showing the following result:
\begin{quote}
	\emph{There exists an algorithm with \textbf{optimal running time} $O(n\log n)$ which  achieves simultaneously \textbf{optimal maximum dislocation} $O(\log n)$ and \textbf{optimal total dislocation} $O(n)$, both \textbf{with high probability}.}
\end{quote}
The dislocation guarantees of our algorithm are optimal, due to the lower bound of \cite{geissmann_et_al}, while the existence of an algorithm achieving a maximum dislocation of $d = O(\log n)$ in time $T(n) = o(n \log n)$ would immediately imply the existence of an algorithm that sorts $n$ elements in $T(n) + O(n \log \log n) = o(n \log n)$ time, even in the absence of comparison errors, thus contradicting the classical $\Omega(n \log n)$ lower bound for comparison-based algorithms.\footnote{Indeed, once the approximately sorted sequence $S$ is computed, it suffices to  apply any $O(n \log n)$ sorting algorithm on the first $m=2\max\{d, \log n\}$ elements of $S$, in order to select the smallest $m/2$ elements. Removing those elements from $S$ and repeating this procedure $\frac{n}{m}$ times, allows to sort in  $T(n) + O(\frac{n}{m} \cdot m \log m)$ time.}

Along the way to our result, we consider the problem of \emph{searching with persistent errors}, defined as follows:
\begin{quote}\emph{We are given an approximately sorted sequence $S$, and an additional element $x \not\in S$. 
	The goal is to compute, under persistent comparison errors, an \emph{approximate rank} (position) of $x$  which differs from the true rank of $x$ in $S$ by a \emph{small} additive error.}
\end{quote}

For this problem, we show an algorithm that requires $O(\log n)$ time to compute, w.h.p., an approximate rank
that differs from the true rank of $x$ by at most $O(\max\{d,\log n\})$, where $d$ is the maximum dislocation of $S$.
For $d=\Omega(\log n)$ this allows to insert $x$ into $S$ without any asymptotic increase of the maximum (and total) dislocation in the resulting sequence. Notice that, if $d$ is also in $O(n^{1-\epsilon})$ for any constant $\epsilon > 0$, this is essentially the best we can hope for, as an easy decision-tree lower bound shows that any algorithm  must require $\Omega(\log n)$ time.
Finally, we remark that \cite{Klein2011} considered the variant in which the original sequence is \emph{sorted}, and the algorithm must compute the correct rank. For this problem, they present an algorithm that runs in $O(\log n \cdot \log \log n)$ time and succeeds with probability $1-f(p)$, with $f(p)$ vanishing as $p$ goes to $0$. As by-product of our result, we can obtain the optimal $O(\log n)$ running time with essentially the same success probability.

\begin{table}
	\centering
	\begin{tabular}{|c|cc|c|}
		\multicolumn{4}{c}{\textbf{Upper bounds}} \\[\smallskipamount] \hline
	\textbf{Running Time}	& \textbf{Max Dislocation} & \textbf{Tot Dislocation} & \textbf{Reference} \\ \hline
	$O(n^{3+c})$	 & $O(\log n)$ \whp & $O(n)$ \whp &  \cite{Braverman2008} \\ 
	$O(n^2)$	& $O(\log n)$ \whp & $O(n\log n)$ \whp  &  \cite{Klein2011} 
	\\ 
	$O(n^2)$	& $O(\log n)$ \whp & $O(n)$ \expe &  \cite{geissmann_et_al} \\ 
	$\tilde{O}(n^{3/2})$	& $O(\log n)$ \whp & $O(n)$ \expe & \cite{newwindowsort} \\ \hline \hline
	$O(n\log n)$	& $O(\log n)$ \whp & $O(n)$ \whp & \textbf{this work}  \\ \hline 
	\multicolumn{4}{c}{} \\[0pt]
	\multicolumn{4}{c}{\textbf{Lower bounds}} \\[\smallskipamount] \hline
	Any & $\Omega(\log n)$ \whp & $\Omega(n)$ \expe &\cite{geissmann_et_al} \\ \hline
	\end{tabular} 
	\caption{The existing approximate sorting algorithms and our result. 
		The constant $c$ in the exponent of the running time of \cite{Braverman2008} depends on the error probability $p$ and it is typically quite large. 
		We write $\Omega(f(n))$ w.h.p. (resp. exp.) to mean that no algorithm can achieve dislocaiton $o(f(n))$ with high probability (resp. in expectation).}
	\label{tb-recurrent}
\end{table}

\subsection{Main Intuition and Techniques}

\paragraph*{Approximate Sorting}

In order to convey the main intuitions behind our $O(n \log n)$-time optimal-dislocation approximate sorting algorithm, we consider the following ideal scenario: 
we already have a perfectly sorted sequence $A$ containing a random half of the elements in our input sequence $S$ and we, somehow, also know the position in which each element $x \in S\setminus A$ should be inserted into $A$ so that the resulting sequence is also sorted (i.e, the \emph{rank} of $x$ in $A$).
If these positions alternate with the elements of $A$, then, to obtain a sorted version of $S$, it suffices to \emph{merge} $S$ and $S \setminus A$, i.e., to simultaneously insert all the elements of $S \setminus A$ into their respective positions of $A$. 
Unfortunately, we are far from this ideal scenario for several reasons: first of all, multiple, say $\delta$, elements in $S \setminus A$ 
might have the same rank in $A$. Since we do not know the order in which those elements should appear, this will already increase the dislocation of the merged sequence to $\Omega(\delta)$. Moreover, due to the lower bound of \cite{geissmann_et_al}, we are not actually able to obtain a perfectly sorted version of $A$ and we are forced to work with a permutation of $A$ having dislocation $d = \Omega(\log n)$, implying that the natural bound on the resulting dislocation can be as large as $d \cdot \delta$. This is a bad news, as one can show that $\delta = \Omega(\log n)$.
However, it turns out that the number of elements in $S \setminus A$ whose positions lie in a $O(\log n)$-wide interval of $A$
is still $O(\log n)$, w.h.p., implying that the final dislocation of $A$ is just $O(\log n)$.

But how do we obtain the approximately sorted sequence $A$ in the first place? We could just recursively apply the above strategy on the (unsorted) elements of $A$, except that this would cause a blow-up in the resulting dislocation due to the constant hidden by the  big-O notation.
We therefore interleave merge steps with invocations of (a modified version of) the sorting algorithm of \cite{geissmann_et_al}, which essentially reduces the dislocation by a constant factor, so that the increase in the worst-case dislocation will be only an \emph{additive} constant per recursive step.

An additional complication is due to the fact that  we are not able to compute the exact ranks in $A$ of the elements in $S\setminus A$. We therefore have to deal, once again, with approximations that are computed using the other main contribution of this paper: \emph{noisy binary search trees}, whose key ideas are described in the following.

\paragraph*{Noisy Binary Search}

As a key ingredient of our approximate sorting algorithm, we need to \emph{merge} an almost-sorted sequence with a set of elements, without causing any substantial increase in the final dislocation.
More precisely, if we are given a sequence $S$ with dislocation $d$ and an element $x$, we want to compute an \emph{approximate rank} of $x$ in $S$, i.e., a position that differs by at most $O(\max\{d, \log n\})$ from the position that $x$ would occupy if the elements $S \cup \{ x \}$ were perfectly sorted.

As a comparison, this same problem has been solved optimally in $O(\log n)$ time in the easier case in which errors are not persistent and $S$ is already sorted \cite{Feige1994}.
The idea of \cite{Feige1994} is to locate the correct position of $x$ using a binary decision tree: ideally each vertex $v$ of the tree
tests whether $x$ appears to belong to a certain \emph{interval} of $S$ and, depending on the results,
one of the children of $v$ is considered next.
As these intervals become narrower as we move from the root towards the leaves, which are in a one-to-one correspondence with positions of $S$, we eventually discover the correct rank of $x$ in $S$.
In order to cope with failures, this process is allowed to \emph{backtrack} when inconsistent comparisons are observed, thus repeating some of the comparisons involving ancestors of $v$. Moreover, 
to guarantee that the result will be correct with high probability, a logarithmic number of consistent comparisons with a leaf are needed before the algorithm terminates.

Notice how this process heavily depends on the fact that it is possible to gather more information on the true relative position of $x$ by repeating a comparison multiple times (in fact, it is easy to design a simple $O(\log^2 n)$-time algorithm by exploiting this fact). Unfortunately, this is not the case anymore when errors are \emph{persistent}.
To overcome this problem we design a \emph{noisy binary search tree} in which the intervals element $x$ is compared with are, in a sense, \emph{dynamic}, i.e., they grow every time the associated vertex is visited.
This, in turn, is a source of other difficulties: first, the intervals of the descendants of $v$ also need to be updated. Moreover, we can obtain inconsistent answers not only due to the erroneous comparisons, but also due to the fact that an interval that initially did not contain $x$ might now become too large. Finally, since intervals overlap, we might end up repeating the same comparison even when two different vertices of the tree are involved.
We overcome these problems by using two search trees that initially comprise of disjoint intervals in $S$, which are selected in a way that ensures that all the bad-behaving vertices are confined into only one of the two trees.

\subsection{Related works}\label{sec:related}
Sorting with \emph{persistent errors} has been studied in several works, starting from  \cite{Braverman2008} who presented the first algorithm achieving optimal dislocation (matching lower bounds appeared only recently in \cite{geissmann_et_al}). 
The algorithm in \cite{Braverman2008} uses only $O(n\log n)$ comparisons, but unfortunately its running time $O(n^{3+c})$ is quite large. For example, for a success probability of $1-1/n$, the analysis in \cite{Braverman2008} yields   $c=\frac{110525}{(1/2-p)^4}$. On the contrary, all subsequent faster algorithms \cite{Klein2011,geissmann_et_al,newwindowsort} -- see Table~\ref{tb-recurrent} --  use a number of comparisons which is asymptotically equal to their respective running time.

Other works considered error models in which repeating comparisons is expensive. For example, \cite{braverman2016parallel} studied algorithms which use a \emph{bounded number of rounds} for some ``easier'' versions of sorting (e.g., distinguishing  the top $k$ elements from the others). In each round, a fresh set of comparison results is generated, and each round consists of $\delta \cdot n$ comparisons. They evaluate the algorithm's performance by estimating the number
of ``misclassified'' elements and also consider a variant in which errors now correspond to missing comparison results.

In general, sorting in presence of errors seems to be computationally more difficult than the error-free counterpart. For instance, \cite{ajtai2016sorting} provides algorithms using \emph{subquadratic} time (and number of comparisons) when errors occur only between  elements whose difference is at most some fixed threshold. Also,  \cite{Damaschke16} gives a \emph{subquadratic} time algorithm when the number $k$ of errors is known in advance. 

As mentioned above, an easier error model is the one with \emph{non-persistent} errors, 
meaning that the same comparison can be \emph{repeated} and the errors are independent, and happen with some probability $p<1/2$. 
In this model it is possible to sort $n$ elements in time $O(n\log(n/q))$, where $1-q$ is the success probability of the algorithm \cite{Feige1994} (see also \cite{alonso,hadji} for the analysis of the classical Quicksort and recursive Mergesort algorithms in this error model).

More generally, computing with errors is often considered in the framework of a two-person game called \emph{R\'{e}nyi-Ulam Game} (see e.g.  the survey \cite{Pelc02} and the monograph \cite{Cicalese13}).

\subsection{Paper Organization}

The paper is organized as follows: in Section~\ref{sec:preliminaries} we give some preliminary definitions; then, in Section~\ref{sec:noisy_binary_search}, we present our noisy binary search algorithm, which will be used in Section~\ref{sec-optimal-sorting} to design a optimal randomized sorting algorithm.
The proof of correctness of this algorithm will make use of an improved analysis of the sorting algorithm of \cite{geissmann_et_al}, which we discuss in Section~\ref{sec:windowsort}. 
Finally, in Section~\ref{sec:derand}, we briefly argue on how our sorting algorithm can be adapted so that it does not require any external source of randomness. Some proofs that only use arguments that are not related to the details of our algorithms are moved to the appendix.

\section{Preliminaries}

\label{sec:preliminaries}

According to our error model, elements possess a true total linear order, however this order can only be observed through noisy comparisons. In the following, given two distinct elements $x$ and $y$, we will write $x \prec y$ (resp. $x \succ y$) to 
mean that $x$ is smaller (resp. larger) than $y$ according to the true order, and $x<y$ (resp. $x>y$) to mean that $x$
appears to be smaller (resp. larger) than $y$ according to the observed comparison result.

Given a sequence or a set of elements $A$ and an element $x$ (not necessarily in $A$), we define $\rank(x, A) = |\{ y \in A : y \prec x\}|$ be as the \emph{true rank} of element $x$ in $A$ (notice  that ranks start from $0$).
Moreover, if $A$ is a sequence and $x \in A$, we denote by $\pos(x, A) \in [0, |S'|-1]$ the \emph{position} of $x$ in $A$ (notice that positions are also indexed from $0$), so that the \emph{dislocation} of $x$ in $A$ is
$\disl(x,S) = |\pos(x,S)-\rank(x,S)|$, and the \emph{maximum dislocation} of the sequence $A$ is $\disl(S) = \max_{x \in S} \disl(x,S)$.

\noindent For $z \in \mathbb{R}$, we write $\ln z$ and $\log z$ to refer to the natural and the binary logarithm of $z$, respectively.

\section{Noisy Binary Search}
\label{sec:noisy_binary_search}

Given a sequence $S = \langle s_0, \dots, s_{n-1} \rangle$ of $n$ elements with maximum dislocation $d \ge \log n$, and element $x$ not in the sequence, we want to compute in time $O(\log n)$ an \emph{approximate rank} of $x$ in $S$, that is, a position where to insert $x$ in $S$ while preserving a $O(d)$ upper bound on dislocation of the resulting sequence. 
More precisely, we want to compute index $r_x$ such that $|r_x - \rank(x,S)|=O(d)$, in presence of persistent comparison errors: Errors between $x$ and the elements in $S$ happen independently with probability $p$, and whether the comparison $x$ between x and an element $y \in S$ is correct or erroneous does not depend on the position of $y$ in $S$, nor on the actual permutation of the sorted elements induced by their order in $S$ (i.e., we are not allowed to pick the order of the elements in $S$ as a function of the errors). We do not impose any restriction on the errors for comparisons that do not involve $x$.

In the following, we will show an algorithm that computes such a rank $r_x$ in time $O(\log n)$.
This immediately implies that $O(\log n)$ time also suffices to insert $x$ into $S$
so that the resulting sequence $\langle s_0, \dots, s_{r_x-1}, x, s_{r_x}, s_{n-1}\rangle$ still has maximum dislocation $O(d)$.
\begin{remark}
Notice that the $O(\log n)$ running time is asymptotically optimal for all $d=n^{1-\epsilon}$,  for constant  $\epsilon<1$,  since a $\Omega(\log n - \log d) = \Omega(\log n)$ decision-tree lower bound holds even in absence of comparison errors.
\end{remark}

In the following, for the sake of simplicity, we let $c = 10^3$ and we assume that $n = 2 c d \cdot 2^h - 1$ for some non-negative integer $h$. Moreover, we focus on $p \le \frac{1}{32}$ even though this restriction can be easily removed to handle all $p < \frac{1}{2}$, as we argue at the end of the section.

We consider the set $\{0, \dots, n\}$ of the possible ranks of $x$ in $S$
and we subdivide them into
$2 \cdot 2^h$ ordered \emph{groups} $g_0, g_1, \dots$ each containing $cd$ contiguous positions, namely, group $g_i$ contains positions  $c i d$, \dots, $c (i+1) d -1$.
Then, we further partition these $2 \cdot 2^h$ groups into two ordered sets $G_0$ and $G_1$, where $G_0$ contains the groups $g_i$ with even $i$ ($i \equiv 0 \pmod{2}$) and $G_1$ the groups $g_i$ with odd $i$ ($i \equiv 1 \pmod{2}$). Notice that $|G_0| =|G_1|= 2^h$. In the next section, for each $G_j$, we shall define a \emph{noisy binary search tree} $T_j$, which will be the main ingredient of our algorithm.

\subsection{Constructing $T_0$ and $T_1$}

Let us consider a fixed $j \in \{0,1\}$ and define $\eta = 2 \lceil \log n \rceil$. 
The tree $T_j$ comprises of a binary tree of height $h + \eta$ in which the first $h+1$ levels (i.e., those containing vertices at depths $0$ to $h$) are complete and the last $\eta $ levels consists of $2^h$ paths of $\eta$ vertices, each emanating from a distinct vertex on the $(h+1)$-th level.
We index the leaves of the resulting tree from $0$ to $2^h-1$,
we use $h(v)$ to denote the depth of vertex $v$ in $T_j$,
and we refer to the vertices $v$ at depth $h(v) \ge h$  as \emph{path-vertices}.
Each vertex $v$ of the tree is associated with one \emph{interval} $I(v)$, i.e., as a set of contiguous positions, as follows: for a leaf $v$ having index $i$, $I(v)$ consists of the positions in $g_{2i+j}$; for a non-leaf path-vertex $v$ having $u$ as its only child, we set $I(v)=I(u)$; finally, for an internal vertex $v$ having $u$ and $w$ as its left and right children, respectively, we define $I(v)$ as the interval containing all the positions between $\min I(u)$ and $\max I(w)$ (inclusive).

\begin{figure}
	\centering
    \includegraphics[width=.9\textwidth]{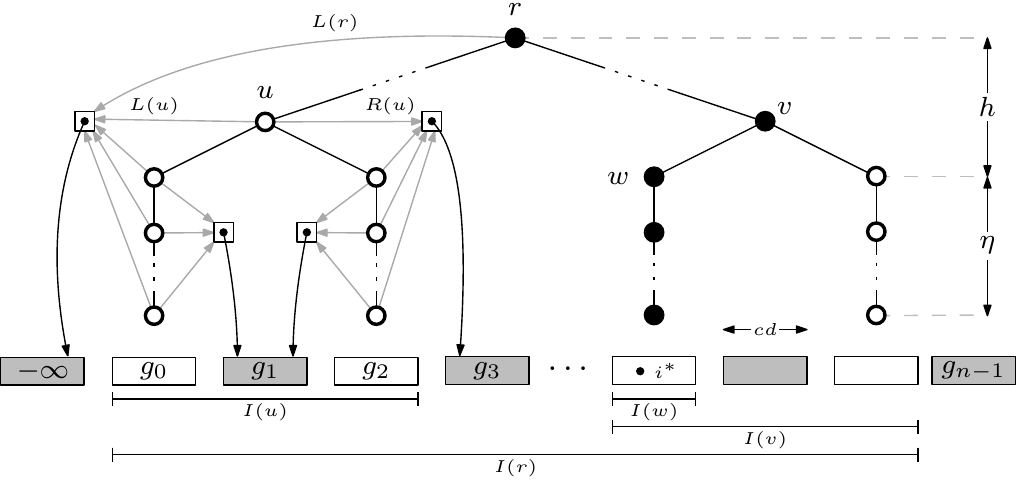}
    \caption{An example of the noisy tree $T_0$. On the left side the shared pointers $L(\cdot)$ and $R(\cdot)$ are shown. Notice how $L(r)$ (and, in general, all the $L(\cdot)$ pointers on the leftmost side of the tree) points to the special $-\infty$ element.  Good vertices are shown in black while bad vertices are white. Notice that, since $i^* \in I(w)$, we have $T^* = T_0$ and hence all the depicted vertices are either good or bad.}
    \label{fig:noisy_tree}
\end{figure}

Moreover, each vertex $v$ of the tree has a reference to two \emph{shared pointers} $L(v)$ and $R(v)$ to positions in $\{0, \dots, n\} \setminus \bigcup_{g_i \in S_j} g_i$. Intuitively, $L(v)$ (resp. $R(v)$) will always point to positions of $S$ occupied by elements that are \emph{smaller} (resp. \emph{larger}) than all the elements $s_i$ with $i \in I(v)$.
For each leaf $v$, let $L(v)$ initially point to $\min I(v) - d - 1$ and $R(v)$ initially point to $\max I(v) + d$.
A non-leaf path-vertex $v$ shares both its pointers with the corresponding pointers of its only child, while a non-path vertex $v$ shares its left pointer $L(v)$ with the left pointer of its left child, and its right pointer $R(v)$ with the right pointer of its right child. See Figure~\ref{fig:noisy_tree} for an example.

Notice that we sometimes allow $L(v)$ to point to negative positions and $R(v)$ to point to positions that are larger than $n-1$. In the following we consider all the elements $s_i$ with $i < 0$ (resp. $i \ge n)$ to be copies a special $-\infty$ (resp. $+\infty$) element such that $-\infty \prec x$ and $-\infty < x$ in every observed comparison (resp. $+\infty \succ x$ and $+\infty > x$).

\subsection{Walking on $T_j$}

The algorithm will perform a discrete-time random walk on each $T_j$.
Before describing such a walk in more detail, it is useful to define the following operation: 
\begin{definition}[test operation]\label{def:test}
	A \emph{test} of an element $x$ with a vertex $v$ is performed by (i) comparing $x$ with the elements $s_{L(v)}$ and $s_{R(v)}$, (ii) decrementing $L(v)$ by $1$ and, (iii) incrementing $R(v)$ by $1$. The tests succeeds if the observed comparison results are $x > s_{L(v)}$ and $x < s_{R(v)}$, otherwise the test fails.
\end{definition}

The walk on $T_j$ proceeds as follows. At time $0$, i.e., before the first step, the \emph{current} vertex $v$ coincides with the root $r$ of $T_j$.
Then, at each time step, we \emph{walk} from the current vertex $v$ to the next vertex as follows:
\begin{enumerate}
	\item We test $x$ with all the children of $v$ and, if \emph{exactly one} of these tests succeeds, we\emph{ walk to the corresponding child}.
	\item Otherwise, if \emph{all the tests fail}, we\emph{ walk to the parent} of $v$, if it exists.
\end{enumerate} 
In the remaining cases, we ``walk'' from $v$ to itself.

\smallskip
\noindent
We fix an upper bound  $\tau = 240 \lfloor \log n \rfloor$ on the total number of steps we perform. The walk stops as soon as one of the following two conditions is met: 
\begin{description}
	\item[Success:] The current vertex $v$ is a leaf of $T_j$, in which case we return $v$;
    \item[Timeout:] The $\tau$-th time step is completed and the success condition is not met.
\end{description}

\subsection{Analysis}

Let $i^* = \rank(x, S)$, and let 
$T^*$ be the unique tree in $\{ T_0, T_1 \}$ such that
$i^*$ belongs to the interval of a leaf in $T^*$, and let $T'$ be the other tree.

\begin{definition}[good/bad vertex]
	We say that a vertex  $v$ of $T^*$ is \emph{good} if $i^* \in I(v)$ and \emph{bad} if either $i^* < \min I(v) - cd$ or $i^* > \max I(v) + cd$.
\end{definition}

Notice that in $T^*$ all the vertices are either good or bad, that the intervals corresponding to vertices at the same depth in $T^*$ are pairwise disjoint, and that the set of good vertices is exactly a root-to-leaf path.
Moreover, all path-vertices of $T'$ are either bad or neither good nor bad.
In both $T'$ and $T^*$ all the children of a bad vertex are also bad. 

\begin{lemma}
\label{lemma:test_good}
A test on a good vertex succeeds with probability at least $1-2p$.
\end{lemma}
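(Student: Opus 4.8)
The plan is to analyze a single test on a good vertex $v$ and show the two required comparisons are each correct with good probability. Recall that $v$ is good means $i^* = \rank(x,S) \in I(v)$, so every position in $I(v)$ lies between some position strictly below $i^*$ and some position at or above $i^*$; in particular $\min I(v) \le i^*$ and $\max I(v) \ge i^* - 1$, or more precisely $i^*$ sits inside the block of contiguous positions $I(v)$. The pointers satisfy $L(v) < \min I(v)$ and $R(v) > \max I(v)$ by construction at the leaves (they start at $\min I(v) - d - 1$ and $\max I(v) + d$) and this property is inherited upward by the sharing rule, but I should be careful: during the walk $L(v)$ is decremented and $R(v)$ is incremented, so $L(v)$ only moves further left and $R(v)$ only further right, hence $L(v) < \min I(v) \le i^*$ and $R(v) > \max I(v) \ge i^* - 1$ continue to hold, where I read off that $s_{L(v)} \prec x$ and $s_{R(v)} \succ x$ in the \emph{true} order (using the $\pm\infty$ convention when the pointer leaves $\{0,\dots,n-1\}$).

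First I would make precise why $L(v) \le i^* - 1$ and $R(v) \ge i^*$ imply $s_{L(v)} \prec x$ and $s_{R(v)} \succ x$. If $S$ were perfectly sorted, position $j < \rank(x,S) = i^*$ would hold an element smaller than $x$, and position $j \ge i^*$ an element larger than $x$. Since $S$ has dislocation at most $d$, an element at position $j \le i^* - 1 - d$ has rank at most $j + d \le i^* - 1 < i^*$, hence is $\prec x$; symmetrically an element at position $j \ge i^* + d$ has rank at least $j - d \ge i^* $, hence is $\succ x$. So I need $L(v) \le i^* - 1 - d$ and $R(v) \ge i^* + d$. At a leaf, $\min I(v) \le i^* \le \max I(v)$, so $L(v) = \min I(v) - d - 1 \le i^* - d - 1$ and $R(v) = \max I(v) + d \ge i^* + d$, as needed; for internal/path vertices the shared pointers coincide with those of a descendant leaf (following leftmost/rightmost descendants), for which the same inequality holds, and decrements/increments during the walk only help. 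This is the one place requiring a little care, and I expect it to be the main (mild) obstacle: tracking that the initial $d$-slack in the pointer offsets, combined with the dislocation-$d$ guarantee on $S$, is exactly enough to certify the true order regardless of the adversarial permutation inside $S$.

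Next, with $s_{L(v)} \prec x$ and $s_{R(v)} \succ x$ established in the true order, the test succeeds precisely when the two observed comparisons $x > s_{L(v)}$ and $x < s_{R(v)}$ both agree with the truth. Each of these comparisons involves $x$ and is therefore wrong independently with probability at most $p$ (by the error model's hypothesis that errors involving $x$ are independent of the positions/permutation of $S$). By a union bound, the probability that at least one of the two is erroneous is at most $2p$, so the test succeeds with probability at least $1 - 2p$, which is the claim. I would also remark that the two comparisons $s_{L(v)}$ vs.\ $x$ and $s_{R(v)}$ vs.\ $x$ are distinct pairs (since $L(v) < R(v)$), so their errors are genuinely independent and the union bound is legitimate; no comparison is "reused" within a single test. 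That completes the proof.
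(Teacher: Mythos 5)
Your proof is correct and follows essentially the same route as the paper: you establish $L(v)\le i^*-d-1$ and $R(v)\ge i^*+d$ from the initial pointer offsets and the monotone decrement/increment of the shared pointers, use the dislocation-$d$ bound to conclude $s_{L(v)}\prec x$ and $s_{R(v)}\succ x$ in the true order, and finish with a union bound over the two comparisons. The brief detour at the start (working only with $L(v)<\min I(v)$ and $R(v)>\max I(v)$) is not enough by itself, but you correctly notice this and recover the needed $d$-slack, matching the paper's argument.
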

\begin{proof}
	Let $v$ be a good vertex.
    Since $i^* \in I(v)$ and the pointer $L(v)$ only gets decremented, we have $L(v) \le \min I(v) - d - 1 \le i^* -d - 1$. Since $S$ has dislocation at most $d$,  $\rank(s_{L(v)}, S) \leq L(v) + d \le  i^*-1$.
	This implies that $s_{L(v)} \prec x$ and hence the probability to observe $S_{L(v)} > x$  during the test is at most $p$.

	Similarly, $R(v)$ is only incremented during the walk and hence $R(v) \ge \max I(v) + d \ge i^* + d$.
    Since $S$ has dislocation at most $d$, this implies that $\rank(s_{R(v)}, S) \ge R(v)-d \ge i^*$ and, in turn, that $s_{R(v)} \succ x$. Therefore, $s_{R(v)} < x$ is observed with probability at most $p$.

By the union bound, $s_L(v) < x < s_R(v)$ with probability at least $1-2p$.
\end{proof}

\begin{lemma}
\label{lemma:test_bad}
A test on a bad vertex succeeds with probability at most $p$.
\end{lemma}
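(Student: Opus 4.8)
The plan is to mirror the argument of Lemma~\ref{lemma:test_good}, but exploiting the fact that a bad vertex has its interval far (by more than $cd$) from the true rank $i^*$, so that \emph{one particular} comparison in the test is against an element whose true order relative to $x$ is forced, and hence the test can succeed only if \emph{that} comparison is the (probability-$p$) erroneous one. Concretely, let $v$ be a bad vertex. By definition either $i^* < \min I(v) - cd$ or $i^* > \max I(v) + cd$; I would treat the first case (the second being symmetric).

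First I would control how far the shared pointer $R(v)$ can have drifted. The pointer $R(v)$ starts at $\max I(v) + d$ for the relevant leaf and is incremented by $1$ on each test of $v$ (or of a path-descendant sharing the pointer). Since the total number of steps of the walk is bounded by $\tau = 240\lfloor \log n\rfloor$, and since each step performs at most a bounded number of tests touching any fixed shared pointer, $R(v)$ can have been incremented at most $O(\log n)$ times; as $d \ge \log n$ and $c = 10^3$, this keeps $R(v) \le \max I(v) + d + O(\log n) < \max I(v) + cd \le i^* - 1$ — wait, we are in the case $i^* < \min I(v) - cd$, so in fact $R(v)$ stays well \emph{below} $i^*$. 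Then, since $S$ has dislocation at most $d$, $\rank(s_{R(v)}, S) \le R(v) + d < i^* $ — hmm, I need $s_{R(v)} \prec x$, i.e.\ $\rank(s_{R(v)},S) \le i^* - 1$, which follows once $R(v) + d \le i^* - 1$, and the bound $R(v) \le \max I(v) + d + c_1\log n \le \min I(v) - 1 + d + c_1 \log n$ together with $i^* < \min I(v) - cd$ gives $R(v) + d < \min I(v) - 1 + 2d + c_1\log n < \min I(v) - 1 - (c - 3)d \cdot(\text{something}) $; the point is that $cd$ dwarfs $2d + c_1\log n = O(d)$ when $c$ is large enough, so indeed $s_{R(v)} \prec x$. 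Consequently, in the test of $v$, the comparison $x < s_{R(v)}$ is \emph{against the true order}, so it is observed with probability at most $p$; but the test succeeds only if $x < s_{R(v)}$ is observed. Hence the test succeeds with probability at most $p$.

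The symmetric case $i^* > \max I(v) + cd$ works the same way, now bounding the leftward drift of $L(v)$ (decremented at most $O(\log n)$ times) so that $L(v) - d > \min I(v) - d - O(\log n) > i^*$ — again the $cd$ gap absorbs the $O(d)$ slack — forcing $s_{L(v)} \succ x$, so that the required observation $x > s_{L(v)}$ has probability at most $p$. In either case the test's success probability is bounded by $p$ by the (trivial, one-event) union bound.

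The main obstacle, and the only place requiring care, is the bookkeeping in the previous paragraph: one must verify that the accumulated drift of the shared pointers over the entire walk is $O(d)$ (using $d \ge \log n$, the step bound $\tau = O(\log n)$, and the fact that a pointer is advanced at most a constant number of times per step), and that the constant $c = 10^3$ is chosen large enough that the ``bad'' gap $cd$ strictly exceeds $d + (\text{initial offset } d) + (\text{drift } O(\log n)) = O(d)$, guaranteeing the forced inequality $s_{R(v)} \prec x$ (resp.\ $s_{L(v)} \succ x$). Everything else is identical in spirit to Lemma~\ref{lemma:test_good}: a bad vertex is ``wrong on one side'' just as a good vertex is ``right on both sides,'' and a failed comparison on the wrong side is exactly a comparison error.
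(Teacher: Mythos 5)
Your overall strategy is the right one and matches the paper's: bound the drift of a shared pointer over the walk, use the dislocation bound on $S$ to deduce that one of the two comparisons in the test is against the \emph{true} order of $x$ and the compared element, and conclude that the test can succeed only if that single comparison is erroneous, which happens with probability at most $p$. However, you have paired the wrong pointer with each case, and the resulting inequalities do not hold.

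In the case $i^* < \min I(v) - cd$ the interval $I(v)$ lies entirely to the \emph{right} of $i^*$, and so do both pointers: $L(v) \ge \min I(v) - d - 1 - (\tau-1) > i^* + d$ and $R(v) \ge \max I(v) + d > i^* + cd + d$. In particular $R(v)$ does \emph{not} stay below $i^*$, and the chain you wrote, $R(v) \le \max I(v) + d + c_1 \log n \le \min I(v) - 1 + d + c_1 \log n$, requires $\max I(v) \le \min I(v) - 1$, which is false. The pointer that matters in this case is $L(v)$: since $L(v) > i^* + d$ and $S$ has dislocation at most $d$, $\rank(s_{L(v)},S) \ge L(v) - d > i^*$, hence $s_{L(v)} \succ x$, so the observation $x > s_{L(v)}$ required for the test to succeed is an error and occurs with probability at most $p$. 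The comparison with $s_{R(v)}$ is not forced to be wrong here (indeed $s_{R(v)} \succ x$ and the test's required observation $x < s_{R(v)}$ agrees with the truth), so it gives you nothing. Symmetrically, in the case $i^* > \max I(v) + cd$ both pointers lie to the \emph{left} of $i^*$, and it is $R(v)$, not $L(v)$, that is pinned: $R(v) \le \max I(v) + d + (\tau-1) < i^* - d - 1$, so $s_{R(v)} \prec x$ and the observation $x < s_{R(v)}$ is the forced error; your claim $L(v) - d > i^*$ is false there since $L(v) < \min I(v) < i^* - cd$. Swapping the roles of $L(v)$ and $R(v)$ in your two cases (and discarding the intermediate inequalities that depend on the incorrect assignment) recovers exactly the paper's proof; the drift-bound $O(\log n) \ll cd$ (via $d \ge \log n$, $\tau = O(\log n)$, and $c = 10^3$) that you invoke is correct and is used in the same way.
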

\begin{proof}
	Let $v$ be a bad vertex and notice that at most $\tau - 1 \le 240  \log n  - 1 \le 240 d - 1 < (c-2)d - 1$ tests have been performed before the current test. 
    We have that either $i^* < \min I(v) - cd$ or $i^* > \max I(v) + cd$.
    
    In the former case, we have $L(v) \ge \min I(v) - d - 1 - (\tau -1) \ge \min I(v) - d - (c-2)d > (i^* + cd) - (c-1)d = i^* + d$. Since $S$ has dislocation at most $d$, this implies  $\rank(s_{L(x)}, S) \geq L(x) -d \ge i^*$. Therefore, $s_{L(x)} \succ x$ and thus $s_{L(x)} > x$ is also observed with probability at least $1-p$, causing the test to fail.

	Similarly, in the latter case, we have $R(v) \le \max I(v) + d + (\tau-1) \le \max I(v) + d + (c-2)d - 1 < (i^* - cd) + (c-1)d - 1 = i^* - d - 1$.
	Therefore, $\rank(s_{R(v)}, S) \le i^* - 1$, implying $s_{R(v)} \prec x$, and hence $s_{R(v)} < x$ is observed with probability at least $1-p$, causing the test to fail.    
	\end{proof}

\noindent We say that a step on $T_j$ from vertex $v$ to vertex $u$  is \emph{improving} iff either:
\begin{itemize}
	\item $u$ is a good vertex and $h(u) > h(v)$ (this implies that $v$ is also good); or
    \item $v$ is a bad vertex and $h(u) < h(v)$.
\end{itemize}
Intuitively, each improving step is making progress towards identifying the interval containing the true rank $i^*$ of $x$, while each non-improving step undoes the progress of at most one improving step.

\begin{lemma}
	\label{lemma:improving_step}
	Each step performed during the walk on $T^*$ is improving with probability at least $1-3p$.
\end{lemma}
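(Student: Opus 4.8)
The plan is to analyze a single step of the walk on $T^*$ from the current vertex $v$, splitting into the cases according to whether $v$ is good or bad, and in each case bounding the probability that the step fails to be improving. Since in $T^*$ every vertex is either good or bad, and the good vertices form a root-to-leaf path while every child of a bad vertex is bad, these two cases are exhaustive and the structure is quite rigid.

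First I would handle the case where $v$ is \emph{good} but not a leaf. Then $v$ has either two children (if $h(v) < h$) or one child (if $h(v) \ge h$), and exactly one of the children, call it $u^*$, is good while any sibling is bad. By Lemma~\ref{lemma:test_good} the test on $u^*$ succeeds with probability at least $1-2p$, and by Lemma~\ref{lemma:test_bad} the test on the bad sibling (if present) succeeds with probability at most $p$; hence with probability at least $1 - 2p - p = 1 - 3p$ the test on $u^*$ succeeds and the test on the sibling fails, so exactly one test succeeds, we walk to $u^*$, and since $h(u^*) = h(v)+1 > h(v)$ and $u^*$ is good, the step is improving. (When $v$ is a good leaf the walk has already stopped via the Success condition, so there is nothing to prove.)

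Next I would handle the case where $v$ is \emph{bad}. Then all children of $v$ are bad, so by Lemma~\ref{lemma:test_bad} each child's test succeeds with probability at most $p$; since there are at most two children, the union bound gives that with probability at least $1 - 2p \ge 1 - 3p$ all tests fail, in which case we walk to the parent of $v$ (which exists because the root $r$ is good, so a bad vertex is never the root), and $h(\mathrm{parent}(v)) = h(v) - 1 < h(v)$, making the step improving. Combining the two cases, every step on $T^*$ is improving with probability at least $1-3p$.

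The only subtlety worth flagging is the independence issue: the bound "$1-3p$" should hold for \emph{this} step regardless of the history of the walk so far. This is fine because Lemmas~\ref{lemma:test_good} and~\ref{lemma:test_bad} were already stated for a single test conditioned on the current pointer positions (they only use that $L(\cdot)$ has been decremented and $R(\cdot)$ incremented at most $\tau-1$ times, which is guaranteed by the timeout), and distinct children of $v$ use distinct shared pointers, so the two comparisons in a step involve four distinct elements of $S$ whose comparison errors with $x$ are mutually independent. Thus the per-step bound is a genuine conditional bound, which is exactly what is needed for the later drift/Azuma-type argument over the $\tau$ steps. I do not expect any real obstacle here beyond carefully checking that the root is good (so "walk to parent" is always available for bad vertices) and that a good non-leaf always has its good child available.
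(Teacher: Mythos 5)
Your proof is correct and follows essentially the same case split and union-bound argument as the paper's own proof: good vertex $\Rightarrow$ test on the unique good child succeeds while the (at most one) bad sibling's test fails, giving $1-2p-p$; bad vertex $\Rightarrow$ both (bad) children's tests fail, giving $1-2p$. The extra remarks you add — that the root of $T^*$ is good (so a bad vertex always has a parent), that a leaf triggers the Success condition so no step is ever taken from a leaf, and that Observation~\ref{obs:independent_comparisons} makes the per-step bound hold conditionally on the history — are all accurate and, in fact, the paper defers the last point to the proof of Lemma~\ref{lemma:timeout} rather than spelling it out inside Lemma~\ref{lemma:improving_step}.
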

\begin{proof}
		Consider a generic step performed during the walk on $T^*$ from a vertex $v$.
        If $v$ is a good vertex, then exactly one child $u$ of $v$ in $T^*$ is good (notice that $v$ cannot be a leaf). 
        By Lemma~\ref{lemma:test_good}, the test on $u$ succeeds with probability at least $1-2p$. Moreover, there can be at most one other child $w \neq u$ of $v$ in $T^*$. If $w$ exists, then it must be bad and, by Lemma~\ref{lemma:test_bad}, the test on $w$ fails with probability at least $1-p$.
	By using the union bound on the complementary probabilities, we have that process walks from $v$ to $u$ with probability at least $1-3p$.

	If $v$ is a bad vertex, then all of its children are also bad. Since $v$ has at most $2$ children and, by Lemma~\ref{lemma:test_bad}, a test on a bad vertex fails with probability at least $1-p$, we have that all the tests fail with probability at least $1-2p$. In this case the process walks from $v$ to the parent of $v$ (notice that, since $v$ is bad, it cannot be the root of $T^*$).
\end{proof}

Since the walk on $T_j$ takes at most $\tau$ steps, any two distinct shared pointers $L(v)$ and $R(u)$ initially differ by at least $cd - 2d - 1 > (c-3)d$ positions, each step increases/decreases at most $4$ pointers (as it performs at most $2$ tests), and $\tau = 240 \lfloor \log n \rfloor < \frac{c-3}{4} d$, we can conclude that no two distinct pointers will ever point to the same position. This, in turn, implies:
\begin{observation}
	\label{obs:independent_comparisons}
	Element $x$ is compared to each element in $S$ at most once.
\end{observation}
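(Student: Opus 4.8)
The plan is to reduce the claim to two facts about the behavior of the shared pointers and then combine them. First I would observe that every comparison the algorithm ever makes between $x$ and an element of $S$ happens inside a \emph{test} of $x$ with some vertex $v$ (Definition~\ref{def:test}), and so compares $x$ either with $s_{L(v)}$ or with $s_{R(v)}$, each evaluated at the value the corresponding pointer holds \emph{just before} it is updated. Hence it suffices to prove: (a) a fixed shared pointer reads each position at most once over the whole execution; and (b) two distinct shared pointers never read the same position.

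Fact (a) is immediate from the mechanics of a test: a shared left pointer is only ever decremented and a shared right pointer is only ever incremented, and each such pointer changes by exactly $1$ right after every test that reads it, so the positions it reads form a strictly monotone sequence and no position is read twice. Fact (b) is where the quantitative preparation done in the paragraph preceding the statement is used: each pointer has total displacement at most $\tau = 240\lfloor\log n\rfloor$ (at most one unit per step, at most $\tau$ steps), while any two distinct pointers either start more than $\tau$ positions apart in the only direction in which their monotone motion could bring them together, or already lie on the side of one another consistent with their motion and only drift apart; combined with monotonicity, this makes the set of positions ever read through one pointer disjoint from the set for any other pointer --- equivalently, no two distinct pointers ever point to the same position, as already concluded there. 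Putting (a) and (b) together, each position of $S$ is read by at most one pointer, which reads it at most once, so $x$ is compared with each element of $S$ at most once.

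The only part that needs genuine care is (b), and there the delicate point is that ``distinct pointers are never equal at a single instant'' is by itself too weak: two monotone pointers can each pass through a common position at \emph{different} times without ever coinciding, and what actually forbids this is that the initial gap between any pair that could approach strictly exceeds the maximum total displacement $\tau$ of a pointer, so their reachable intervals cannot overlap. One also has to deal separately with the pairs that start close --- for instance the right pointer of a leaf of $T_0$ and the left pointer of the geometrically adjacent leaf of $T_1$, which start only $2d$ apart --- by noting that these sit on opposite sides and move apart, so their reachable sets are disjoint for a trivial reason; everything else is the bookkeeping already carried out before the statement.
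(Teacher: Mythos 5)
Your proof takes essentially the same route as the paper's: decompose the claim into (a) monotonicity of each shared pointer, which gives that a fixed pointer never re-reads a position, and (b) disjointness of the reachable position-sets of distinct pointers, established by comparing initial gaps against maximum total displacement $\tau$. This is exactly what the paper does in the paragraph preceding the observation, and you are right that (b) must be read as ``reachable sets disjoint'' (not merely ``never simultaneously equal''), which is what the combined displacement-vs-gap bound actually delivers. Two remarks. First, a small quantitative slip: for a pair of pointers approaching each other, both can be touched within the same step (e.g. $R(u)$ and $L(w)$ for the two children $u,w$ of the current vertex), so the relevant sufficiency condition is that the initial gap exceed roughly $2\tau$, not $\tau$; this does not affect your conclusion, since every approaching pair starts at least $cd-2d>4\tau$ apart, but the threshold you wrote is not in itself sufficient. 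Second, your explicit treatment of the $2d$-gap cross-tree pairs (e.g.\ $R$ of a $T_0$ leaf versus $L$ of the adjacent $T_1$ leaf, which diverge) is actually more careful than the paper's one-line claim that all $L$/$R$ pairs start at least $cd-2d-1$ apart --- that claim is literally false for these pairs, and your observation that they move apart is what saves the argument there.
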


The following lemmas show that the walk on $T^*$ is likely to return a vertex corresponding to a good interval, while the walk on $T' \neq T^*$ is likely to either timeout or to return a non-bad vertex, i.e., a vertex whose corresponding interval contains positions that are close to the true rank of $x$ in $S$.

\begin{lemma}
	\label{lemma:timeout}
	The walk on $T^*$ timeouts with probability at most $e \cdot n^{-6}$.
\end{lemma}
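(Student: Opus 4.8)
**Proof proposal for Lemma (timeout probability on $T^*$).**

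The plan is to show that the walk on $T^*$ makes enough net progress, with high probability, to reach a leaf within the step budget $\tau = 240\lfloor \log n\rfloor$. Recall from Lemma~\ref{lemma:improving_step} that each step on $T^*$ is \emph{improving} with probability at least $1-3p$, independently of the past (this independence uses Observation~\ref{obs:independent_comparisons}: element $x$ is never compared to the same element of $S$ twice, so each step's randomness is fresh). I would set up a potential function $\Phi$ on the current vertex $v$ of $T^*$ that is nondecreasing on improving steps and decreases by at most the same amount on non-improving steps. The natural choice is: if $v$ is good, $\Phi(v) = h(v)$ (its depth); and once the walk enters a bad vertex, a symmetric accounting applies as the walk is pushed back toward the root and then down again along the good path. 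The key structural facts are that the good vertices form exactly a root-to-leaf path, that all children of a bad vertex are bad, and that a bad vertex is never the root — so from a bad vertex the walk necessarily moves toward the root, and along the good path an improving step strictly increases depth. Thus after $t$ steps, the depth of the current good vertex (or the "progress toward re-entering the good path") is at least $(\text{number of improving steps}) - (\text{number of non-improving steps})$ in the first $t$ steps, since each non-improving step undoes at most one improving step.

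Concretely, let $X_1, X_2, \dots$ be indicator variables, $X_k = 1$ if the $k$-th step is improving. Each $X_k$ stochastically dominates an independent Bernoulli$(1-3p)$ variable; with $p \le \tfrac{1}{32}$ we have $1-3p \ge \tfrac{29}{32} > \tfrac{3}{4}$, so $\E[\sum_{k=1}^\tau X_k] \ge \tfrac{3}{4}\tau$. To reach a leaf we need the number of improving steps to exceed the number of non-improving steps by at least $h + \eta = h + 2\lceil\log n\rceil$, i.e. we need $\sum_{k=1}^{\tau} X_k - (\tau - \sum_{k=1}^\tau X_k) \ge h+\eta$, equivalently $\sum_{k=1}^\tau X_k \ge \tfrac{\tau + h + \eta}{2}$. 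Since $n = 2cd\cdot 2^h - 1 \le 2cd \cdot 2^h$ with $c = 10^3$ and $d \ge \log n$, one has $h \le \log n$ and $\eta = 2\lceil \log n\rceil \le 2\log n + 2$, so $h + \eta \le 3\log n + 2$, whereas $\tau = 240\lfloor\log n\rfloor$, making $\tfrac{\tau+h+\eta}{2}$ comfortably below $\tfrac{3}{4}\tau$ (roughly $\tfrac{\tau}{2} + O(\log n)$ versus $\tfrac{3}{4}\tau$, a gap of about $\tfrac{\tau}{4} \ge 60\log n - O(1)$). A Chernoff bound on $\sum_{k=1}^\tau X_k$ around its mean then gives a failure probability at most $\exp(-\Omega(\tau)) = \exp(-\Omega(\log n))$, and the constants ($\tau = 240\lfloor\log n\rfloor$, deviation $\ge 60\log n$) are chosen precisely so that this bound is at most $e\cdot n^{-6}$.

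The main obstacle, and the step I would be most careful about, is justifying that the $X_k$'s can be coupled with genuinely independent Bernoulli trials so that a Chernoff bound applies. The subtlety is that "improving" depends on the outcomes of tests at the current vertex, and tests decrement/increment the shared pointers $L(\cdot), R(\cdot)$; one must argue that, conditioned on the entire history of the walk up to step $k$, the fresh comparisons performed at step $k$ involve elements of $S$ that $x$ has never been compared to before (Observation~\ref{obs:independent_comparisons}), and that Lemmas~\ref{lemma:test_good} and~\ref{lemma:test_bad} give the stated per-test probabilities conditionally, \emph{regardless} of which good/bad vertex is current and how large its pointers have grown — which is exactly what those lemmas were stated to provide (the bounds hold after up to $\tau-1$ prior tests). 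Hence $\Pr[X_k = 1 \mid \text{history}] \ge 1-3p$ always, which is enough to invoke the standard Chernoff–Hoeffding bound for sums dominating independent Bernoulli variables. A secondary point to verify is the potential-function bookkeeping when the walk is at a non-root good vertex $v$ and the step is non-improving: it could walk to the parent (still good, depth decreases by $1$) or stay put (depth unchanged); either way the decrease is at most $1$, matching the increase of an improving step, so the "net improving steps $\ge$ depth" invariant is maintained. Once these two points are pinned down, the rest is the routine Chernoff computation sketched above.
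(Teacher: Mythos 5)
Your proposal matches the paper's proof essentially step for step: both define indicator variables for improving steps, observe that a timeout requires the number of improving steps minus non-improving steps to stay below $h+\eta$, couple with i.i.d.\ Bernoulli trials via stochastic dominance (justified by Observation~\ref{obs:independent_comparisons}), and close with a Chernoff bound. Your explicit treatment of the two subtleties (why the coupling is legitimate given the shared-pointer updates, and why each non-improving step costs at most one unit of progress) is exactly what the paper implicitly relies on, so the proposal is correct and takes the same route.
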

\begin{proof}
	For $t=1,\dots, \tau$, let $X_t$ be an indicator random variable that is equal to $1$ iff the $t$-th step of the walk on $T^*$ is improving. If the $t$-th step is not performed then let $X_t = 1$.

	Notice that if, at any time $t'$ during the walk, the number $X^{(t')} = \sum_{t=1}^{t'} X_t$ of improving steps exceeds the number of non-improving steps by at least $h + \eta$, then the success condition is met.
    This means that a necessary condition for the walk to timeout is $X^{(\tau)} - (\tau - X^{(\tau)}) < h + \eta$, which is equivalent to $X^{(\tau)}  < \frac{h + \eta + \tau}{2}$.
    
    By Observation~\ref{obs:independent_comparisons} and by Lemma~\ref{lemma:improving_step}, we know that 
    each $X_t$s corresponding to a performed steps satisfies $P(X_t = 1) \ge 1-3p > 1 - \frac{1}{10}$ (since $p \le \frac{1}{32}$), regardless of whether the other steps are improving.
    We can therefore consider the following experiment: at every time step $t=1,\dots, \tau$ we flip a coin that is heads with probability $q = \frac{9}{10}$, we let $Y_t = 1$ if this happens and $Y_t = 0$ otherwise.
    Clearly, the probability that $X^{(\tau)}  < \frac{h + \eta + \tau}{2}$ is at most the probability that 
    $Y^{(\tau)}  = \sum_{t=1}^{\tau} Y_t < \frac{h + \eta + \tau}{2}$. 
    By noticing that $\tau > 3 (h + \eta )$, and by using the Chernoff bound:
    \begin{multline*}
    	\Pr\left(X^{(\tau)} < \frac{h + \eta + \tau}{2}\right)
        \le \Pr\left(Y^{(\tau)} < \frac{h + \eta + \tau}{2}\right)
        \le \Pr\left(Y^{(\tau)} < \frac{2\tau}{3}\right) 
        = \Pr\left(Y < \frac{2 \E[Y]}{3q} \right) \\
        < \Pr\left(Y < \left(1- \frac{1}{4}\right) \E[Y] \right)
        \le e^{-\frac{\tau q}{32}}
        <  e^{-\frac{\tau}{40}} 
        \le e^{1-6 \log n} < e \cdot n^{-6}.
    \end{multline*}
\end{proof}

\begin{lemma}
	\label{lemma:bad_vertex_returned}
	A walk on $T_j$ returns a bad vertex with probability at most $240 n^{-7}$.
\end{lemma}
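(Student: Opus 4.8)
The plan is to bound, separately, the probability that a bad vertex is returned by a walk on $T^*$ and the probability that a bad vertex is returned by a walk on $T'$, and then take a union bound. For the walk on $T^*$: by Lemma~\ref{lemma:timeout}, the walk fails to reach a leaf (timeouts) with probability at most $e\cdot n^{-6}$. Conditioned on the walk reaching a leaf $v$ (i.e., on the success condition being met), I would argue that $v$ must be a good vertex with high probability. The key structural fact is that in $T^*$ every vertex is either good or bad, the good vertices form a single root-to-leaf path, and all children of a bad vertex are bad; so once the walk enters a bad subtree it can only leave it by walking to parents. To return a \emph{bad} leaf, the walk would have to be inside a bad subtree and simultaneously meet the success condition, i.e., accumulate an excess of $h+\eta$ improving steps over non-improving ones while located at a bad vertex — but improving steps from bad vertices decrease depth, so such an excess forces the walk back up out of every bad subtree before a leaf can be reached as the current vertex. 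Thus returning a bad leaf requires the net improving-step count argument of Lemma~\ref{lemma:timeout} to go the "wrong way" for the bad part of the walk; this is again a Chernoff bound on a biased random walk, using $P(\text{improving})\ge 1-3p$ from Lemma~\ref{lemma:improving_step} and $\tau > 3(h+\eta)$, and yields a bound of the same $n^{-\Omega(1)}$ order.

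For the walk on $T'$: here the relevant fact, recorded right after the good/bad definition, is that every \emph{path-vertex} of $T'$ is either bad or "neither good nor bad", and that children of bad vertices are bad. I would show that to return a bad \emph{leaf} (leaves are path-vertices), the walk must at some point descend from a non-bad path-vertex into a bad one, or must already be at a bad path-vertex and keep descending; in either case, by Lemma~\ref{lemma:test_bad}, each test on a bad vertex succeeds with probability at most $p$, so the probability that the walk takes the $\eta = 2\lceil\log n\rceil$ consecutive successful tests needed to traverse a path of bad vertices down to a leaf is at most $p^{\eta} \le n^{-\Omega(1)}$; more carefully, since bad vertices can only be left toward the parent when all tests fail and are "advanced" only on a successful test (probability $\le p$), the net number of successful-down-steps needed to cover the last $\eta$ levels starting from a bad path-vertex is $\eta$, and a union bound over the at most $\tau$ starting times and a Chernoff/binomial tail on the number of successes among the $\le \tau$ bad-vertex tests gives the bound. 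Combining the two contributions and the timeout bound, and absorbing constants (using $\tau = 240\lfloor\log n\rfloor$ and $p \le \frac{1}{32}$), yields a total probability of at most $240\, n^{-7}$.

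The main obstacle I anticipate is handling $T'$ rigorously: unlike $T^*$, its vertices are not cleanly partitioned into good/bad, only the path-vertices are controlled, and the interaction between the random walk's dynamics (it can idle, backtrack, or advance) and the shifting intervals makes it delicate to pin down exactly which tests are "bad-vertex tests" and to argue the needed events are (conditionally) independent enough for a Chernoff bound. I would sidestep most of this by using Observation~\ref{obs:independent_comparisons} — each element of $S$ is compared to $x$ at most once, so the comparison results involved in all bad-vertex tests are mutually independent — and then bound the probability that, among the at most $\tau$ bad-vertex tests performed, enough succeed to carry the walk down the final $\eta$ levels. Counting improving vs. non-improving steps restricted to the bad part of the walk, exactly as in Lemma~\ref{lemma:timeout} but with the roles reversed, reduces everything to a single binomial tail bound; the bookkeeping to show "returning a bad leaf $\Rightarrow$ $\ge \eta$ successful bad-vertex tests net of failures" is the part that needs care but no new ideas.
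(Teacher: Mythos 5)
Your core idea for the $T'$ case is essentially the paper's proof, and you should notice that it already covers $T^*$ with no modification: the case split you introduce is unnecessary, and the separate argument you sketch for $T^*$ is where the proposal drifts off course. The paper's key structural observation is that a leaf $v$ and its $\eta$ ancestors form a path of vertices all sharing $I(v)$; therefore a bad leaf sits at the bottom of a chain of $\eta$ bad path-vertices, and this is true regardless of which tree ($T^*$ or $T'$) we are walking on. Reaching a bad leaf thus forces the walk to travel a distance of $\eta$ along this chain, and the paper bounds this by the exact gambler's-ruin absorption probability for a walk with up-probability $\le p$ (Lemma~\ref{lemma:test_bad}) and down-probability $\ge 1-p$: the chance of absorption at the far end is at most $(p/(1-p))^\eta \le n^{-8}$. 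A union bound over the at most $\tau$ times the walk can arrive at a depth-$h$ vertex gives $\tau n^{-8} \le 240 n^{-7}$. Your binomial/Chernoff tail on the number of successful bad-vertex tests is a workable substitute for the gambler's-ruin formula, but it is both looser and harder to state cleanly (you would need to bound the maximum excursion of the walk, not a fixed-horizon count), whereas the closed-form absorption probability is exactly what is needed.

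The $T^*$ branch of your proposal, by contrast, has a real gap. You try to reuse Lemma~\ref{lemma:improving_step}'s improving-step accounting, but that lemma (and the Chernoff bound in Lemma~\ref{lemma:timeout}) only shows that the walk is likely to reach \emph{some} leaf; it does not distinguish the good leaf from a bad one, because "improving" at a bad vertex means moving toward the parent, so the step count alone does not localize the walk. Making the "goes the wrong way" intuition precise would essentially force you to re-derive the same gambler's-ruin estimate restricted to the bad path, at which point the $T^*$/$T'$ distinction evaporates. In short: keep the argument you wrote for $T'$, observe it applies verbatim to both trees because path-vertices inherit the leaf's interval, replace the Chernoff tail with the two-sided absorption probability, and drop the separate $T^*$ treatment.
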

\begin{proof}
	Notice that, in order to return a bad vertex $v$, the walk must
    first reach a vertex $u$ at depth $h(u)=h$, and then traverse the $\eta$ vertices of the path rooted in $u$ having $v$ as its other endpoint.
        
    We now bound the probability that, once the walk reaches $u$, it will also reach $v$ before walking back to the parent of $u$.
    Notice that all the vertices in the path from $u$ to $v$ are associated to the same interval, and hence they are all bad. 
	Since a test on a bad vertex succeeds with probability at most $p$ (see Lemma~\ref{lemma:test_bad}) and tests are independent (see Observation~\ref{obs:independent_comparisons}), the sought probability can be upper-bounded by considering a random walk on $\{0 ,\dots, \eta+1 \}$ that: (i) starts from $1$, (ii) has one absorbing barrier on $0$ and another on $\eta+1$, and (iii) for any state $i \in [1, \eta]$ has a probability of transitioning to state $i+1$ of $p$ and to state $i-1$ of $1-p$.
    Here state $0$ corresponds to the parent of $u$, and state $i$ for $i>0$ corresponds to the vertex of the $u$--$v$ path at depth $h+i-1$ (so that state $1$ corresponds to $u$ and state $\eta+1$ corresponds to $v$).
    
    The probability of reaching $v$ in $\tau$ steps is at most the probability of being absorbed in $\eta$ (in any number of steps), which is (see, e.g., \cite[pp.~344--346]{feller1957introduction}):
\[
		\frac{  \frac{1-p}{p} - 1 }{ (\frac{1-p}{p})^{\eta+1} - 1 } \le
    	\left( \frac{p}{1-p} \right)^\eta \le
    	\left( \frac{1}{31} \right)^{2\log n}  <
        \left( \frac{1}{2^4} \right)^{2\log n} 
        = 2^{-8 \log n} = n^{-8},
\]
where we used the fact that $p \le \frac{1}{32}$.

Since the walk on $T_j$ can reach a vertex at depth $h$ at most $\tau$ times, by the union bound we have that the probability of returning a bad interval is at most $\tau n^{-8} \le 240 n^{-8}  \log n  \le 240 n^{-7}$.
\end{proof}

\noindent We are now ready to prove the main result of this section.

\begin{theorem}
	\label{thm:noisy_binary_search}
	Let $S$ be an sequence of $n$ elements having maximum dislocation at most $d \ge \log n$ and let $x \not\in S$.
	Under our error model, an index $r_x$ such that $r_x \in [ \rank(x,S) - \alpha d,  \rank(x,S) + \alpha d]$ can be found in $O(\log n)$ time with probability at least $1- O(n^{-6})$, where $\alpha > 1$ is an absolute constant.
\end{theorem}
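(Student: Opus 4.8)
The plan is to run the random walk of the previous subsections on \emph{both} trees $T_0$ and $T_1$ (sequentially, so that the cost is still $O(\log n)$ by the argument below), and to read off $r_x$ from the leaves reached. Concretely: if the walk on $T_0$ terminates in the \textbf{Success} state at a leaf $v$, output $r_x=\min I(v)$; otherwise, if the walk on $T_1$ terminates at a leaf $v$, output $r_x=\min I(v)$; if both walks time out, output an arbitrary index (this case will be charged to the failure probability). Recall that exactly one of the two trees, call it $T^*$, has a leaf $v$ with $i^*:=\rank(x,S)\in I(v)$, and let $T'$ be the other one. I would define the \emph{bad event} $B$ as the union of three events: (i) the walk on $T^*$ times out; (ii) the walk on $T_0$ returns a bad vertex; (iii) the walk on $T_1$ returns a bad vertex. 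Lemma~\ref{lemma:timeout} bounds the probability of (i) by $e\cdot n^{-6}$ and Lemma~\ref{lemma:bad_vertex_returned} bounds the probability of each of (ii) and (iii) by $240n^{-7}$ (the construction, with its $\pm d$ buffers and $G_0,G_1$ using disjoint groups, guarantees as in Observation~\ref{obs:independent_comparisons} that $x$ is compared with each element of $S$ at most once even across the two walks, so these lemmas apply verbatim); a union bound thus gives $\Pr(B)\le e\cdot n^{-6}+480n^{-7}=O(n^{-6})$.

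Next I would argue that, conditioned on $B$ not occurring, the output satisfies $|r_x-i^*|<2cd$, proving the claim with the absolute constant $\alpha=2c$ (recall $c=10^3$). Since (i) fails, the walk on $T^*$ reaches a leaf, hence at least one of the two walks succeeds and we do output $r_x=\min I(v)$ for a leaf $v$ of $T_0$ or $T_1$. If $v$ lies in $T^*$: since (ii) and (iii) fail, $v$ is not bad, and every vertex of $T^*$ is good or bad, so $v$ is good, i.e.\ $i^*\in I(v)$; as $I(v)$ is a block of $cd$ consecutive positions, $|r_x-i^*|<cd$. If $v$ lies in $T'$: leaves sit at depth $h+\eta>h$, so $v$ is a path-vertex and therefore either bad or neither good nor bad; since (ii) and (iii) fail, $v$ is not bad, hence $\min I(v)-cd\le i^*\le\max I(v)+cd$, and using $\max I(v)=\min I(v)+cd-1$ we obtain $i^*-2cd<r_x=\min I(v)\le i^*+cd$, i.e.\ $|r_x-i^*|<2cd$. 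So the bound holds in both cases.

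Finally, for the running time: each walk performs at most $\tau=240\lfloor\log n\rfloor=O(\log n)$ steps, and a step consists of at most two test operations (hence $O(1)$ comparisons and pointer updates) and a move to a neighbour, so each walk uses $O(\log n)$ comparisons and $O(\log n)$ work. The only caveat is that $T_0$ and $T_1$ together have $\Theta(\tfrac nd\log n)=\omega(\log n)$ vertices and shared pointers, so they must not be materialised; instead I would create a vertex and resolve its two shared pointers only on the first visit. A shared $L$-pointer is keyed by the index of the leftmost descendant leaf of the vertices sharing it (and symmetrically for $R$-pointers), its initial value ($\min I(\cdot)-d-1$, resp.\ $\max I(\cdot)+d$) is a closed-form $O(1)$-time function of that index, and the $O(\log n)$ pointers actually touched are kept in a dictionary recording their current values, consistently with the decrements/increments performed by tests. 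This yields total time $O(\log n)$. I expect the main obstacle to be exactly this last point — arguing that the trees can be simulated lazily in $O(\log n)$ time while keeping the dynamic shared pointers consistent; the probabilistic part is merely a union bound over the three already-established failure modes.
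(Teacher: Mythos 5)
Your proposal is correct and follows essentially the same route as the paper's proof: run the walks on both $T_0$ and $T_1$, bound the failure probability by a union bound over the timeout event on $T^*$ (Lemma~\ref{lemma:timeout}) and the bad-vertex events on each tree (Lemma~\ref{lemma:bad_vertex_returned}), observe that a non-bad returned leaf $v$ satisfies $\min I(v)-cd\le i^*\le\max I(v)+cd$ hence $|r_x-i^*|<2cd$, and obtain the $O(\log n)$ time bound via lazy construction of the visited portion of the trees and pointers. The only differences (early termination after $T_0$, choosing $r_x=\min I(v)$ rather than an arbitrary element of $I(v)$, and the split into the $v\in T^*$ / $v\in T'$ cases) are cosmetic and do not change the argument.
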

\begin{proof}
	We compute the index $r_x$ by performing two random walks on $T_0$ and on $T_1$, respectively.
    If any of the walks returns a vertex $v$, then we return any position in the interval $I(v)$ associated with $v$. If both walks timeout we return an arbitrary position.
    
    From Lemma~\ref{lemma:timeout} the probability that both walks timeout is at most $\frac{e}{n^6}$ (as the walk on $T^*$ timeouts with at most this probability).
    Moreover, by Lemma~\ref{lemma:bad_vertex_returned}, the probability that at least one of the two walks returns a bad vertex is at most $\frac{480}{n^7}$.
    
    By the union bound, we have that with probability at most $1-\frac{480}{n^7} - \frac{e}{n^6} = 1 - O(\frac{1}{n^6})$, vertex $v$ exists and it is not bad. In this case, using the fact that $r_x \in [\min I(v), \max I(v)]$ and that $\max I(v) - \min I(v) < cd$, we have:
    \[
	    i^* \ge \min I(v) - cd  > \max I(v) - 2cd \ge r_x - 2cd,
    \]
    and
	\[
    	i^* \le \max I(v) + cd < \min I(v) + 2cd \le r_x + 2cd.
     \]
     
     To conclude the proof it suffices to notice that the random walk requires at most $\tau = O(\log n)$ steps, that each step requires constant time, and that it is not necessary to explicitly construct $T_0$ and $T_1$ beforehand.
     Instead, it suffices to maintain a partial tree consisting of all the vertices visited by the random walk so far: vertices (and the corresponding pointers) are \emph{lazily} created and appended to the existing tree whenever the walk visits them for the first time.
\end{proof}

To conclude this section, we remark that our assumption that $p \le \frac{1}{32}$ can be easily relaxed to handle any constant error probability $p < \frac{1}{2}$. This can be done by modifying the test operation so that, when $x$ is tested with a vertex $v$, the majority result of the comparisons between $x$ and the set $\{ s_{L(v)}, s_{L(v)-1}, \dots, s_{L(v)-k+1} \}$ (resp.
$x$ and the set $\{ s_{R(v)}, s_{R(v)+1}, \dots, s_{R(v)+k-1} \}$) of $\eta$ elements is considered, where $k$ is a constant that only depends on $p$.
Consistently, the pointers $L(v)$ and $R(v)$ are shifted by $k$ positions, and the group size is increased to $k \cdot c$ to ensure that Observation~\ref{obs:independent_comparisons} still holds. Notice how our description for $p \le \frac{1}{32}$ corresponds exactly to the case $k=1$. 
The only difference in the statement Theorem~\ref{thm:noisy_binary_search} is that $\alpha$ is no longer an absolute constant, rather, it depends (only) on the value of $p$.

\section{Optimal Sorting Algorithm}\label{sec-optimal-sorting}

\subsection{The algorithm}
\label{sec:rifflesort}

We will present an optimal approximate sorting algorithm that, given a sequence $S$ of $n$ elements, computes, in $O(n\log n)$ worst-case time, a permutation of $S$ having maximum dislocation $O(\log n)$ and total dislocation $O(n)$, w.h.p.
In order to avoid being distracted by roundings, we assume that $n$ is a power of $2$ (this assumption can be easily removed by padding the sequence $S$ with dummy $+\infty$ elements).
Our algorithm will make use of the noisy binary search of Section~\ref{sec:noisy_binary_search} and of algorithm \windowsort  presented in \cite{geissmann_et_al}. For our purposes, we need the following \emph{stronger} version of the original analysis in \cite{geissmann_et_al}, in which the bound on the total dislocation was only given in expectation: 

\begin{restatable}{theorem}{windowsortthm}
\label{thm:windowsort}
	Consider a set of $n$ elements that are subject to random persistent comparison errors.
	For any dislocation $d$, and for any (adversarially chosen) permutation $S$ of these elements whose dislocation is at most $d$, 
	$\windowsort(S,d)$ requires $O(n d)$ wost-case time and computes, with probability at least $1-\frac{1}{n^4}$, a permutation of $S$ having maximum dislocation at most $c_p \cdot \min\{d, \log n \}$ and total dislocation at most $c_p \cdot n$, where $c_p$ is a constant depending only on the error probability $p < \frac{1}{16}$. 
\end{restatable}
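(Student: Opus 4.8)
The statement bundles three assertions---the $O(nd)$ running time, the maximum-dislocation bound, and the total-dislocation bound---and the plan is to dispatch the first two quickly, since up to constants they are already contained in \cite{geissmann_et_al}, and to devote the effort to the genuinely new point: boosting the total-dislocation guarantee from expectation to high probability. Recall that \windowsort\ runs in $O(\log d)$ phases; a phase with window parameter $w$ (which shrinks geometrically from $\Theta(d)$ down to $\Theta(\log n)$) assigns to each element $y$ a \emph{score} equal to the number of elements within the window of radius $\Theta(w)$ around the current position of $y$ that appear smaller than $y$, and then re-sorts the elements by increasing score, breaking ties by current position. A phase costs $O(nw)$ time, so the total is $\sum_w O(nw)=O(nd)$. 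For the maximum dislocation one argues inductively over phases, exactly as in \cite{geissmann_et_al}: if the sequence entering a phase has dislocation at most $w$, then for a fixed element the probability that its dislocation after the phase exceeds $w/2$ (resp.\ exceeds $c_p\log n$, in the final phases where $w=\Theta(\log n)$) is $e^{-\Omega(w)}$, by a Chernoff bound on the number of erroneous comparisons in a window around it; since $w=\Omega(\log n)$ throughout this is $n^{-\Omega(1)}$, and a union bound over the $n$ elements and the $O(\log n)$ phases---with the hidden constants tuned appropriately---yields the claimed $1-n^{-4}$ success probability.

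It remains to bound the total dislocation. Conditioning on the events above, the output $S^\star$ already satisfies $\disl(S^\star)\le c_p\log n$, so it suffices to analyse the last phase, whose input $S'$ has dislocation $O(\log n)$. The idea is to avoid reasoning about individual dislocations, which depend globally on the re-sorting step, and instead to use the elementary identity $\sum_{x}\disl(x,S^\star)=\sum_{k=1}^{n-1}|A_k\triangle B_k|$, where $A_k$ denotes the set of the $k$ elements of smallest true rank and $B_k$ the set of the first $k$ elements of $S^\star$; here $|A_k\triangle B_k|=2\,|\{x:\pos(x,S^\star)<k\le\rank(x)\}|$ counts the elements straddling the cut at position $k$. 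Because $\disl(S^\star)\le c_p\log n$, every such straddling element has true rank within $c_p\log n$ of $k$, whence $|A_k\triangle B_k|\le 2c_p\log n$; more importantly---this is the crucial locality observation---the value of $|A_k\triangle B_k|$ is determined solely by the comparison results (in the last phases) among the $O(\log n)$ elements whose true rank lies in an interval of width $O(\log n)$ centred at $k$: only such elements can occupy positions near $k$ in $S^\star$ or in $S'$, and their scores depend only on their windows, which again consist of such elements.

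Concentration then follows from a bucketing argument, which we describe in the regime $d=\Theta(\log n)$ relevant to our sorting algorithm (the general case is analogous, using $\Theta(d/\log n)$ colour classes below to absorb the wider windows of the early phases). Partition the cut points into consecutive blocks of length $\Theta(\log n)$ and set $Z_j=\sum_{k\in\text{block }j}|A_k\triangle B_k|$, together with its truncation $\widetilde Z_j=\min\{Z_j,\,O(\log^2 n)\}$. Each $\widetilde Z_j$ depends only on the comparisons among the elements of true rank within $O(\log n)$ of block $j$, so $\widetilde Z_j$ and $\widetilde Z_{j'}$ are independent once $|j-j'|$ exceeds a fixed constant (comparison errors being independent across pairs); moreover $\E[\widetilde Z_j]\le\E[Z_j]=O(\log n)$, the latter being the per-cut estimate $\E|A_k\triangle B_k|=O(1)$ obtained by summing the uniform per-element tail $\Pr[\disl(x,S^\star)\ge s]\le e^{-\Omega(s)}$ over the elements of true rank at least $k$ (this tail is precisely what underlies the $\E[\sum_x\disl(x,S^\star)]=O(n)$ bound of \cite{geissmann_et_al}). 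Colouring the blocks with $O(1)$ colours so that same-coloured blocks are pairwise far apart, within each colour class the $\widetilde Z_j$ are mutually independent, bounded by $O(\log^2 n)$, and have total expectation $O(n)$; a Chernoff bound for sums of bounded independent variables gives that their sum is $O(n)$ with probability at least $1-e^{-\Omega(n/\log^2 n)}\ge 1-n^{-5}$. On the good event $\disl(S^\star)\le c_p\log n$ we have $\widetilde Z_j=Z_j$ for all $j$, hence $\sum_x\disl(x,S^\star)=\sum_j Z_j=O(n)$; a final union bound over the $O(1)$ colour classes and the events of the earlier phases, with all constants absorbed into $c_p$, finishes the proof.

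The step that needs the most care is the locality observation: re-sorting by score is a priori a global operation, so that $|A_k\triangle B_k|$ seems to depend on all the scores at once. The resolution is exactly to lean on the maximum-dislocation bound already in hand---it pins down which elements can possibly cross a given cut and caps how far any element can travel, which confines to a rank-band of width $O(\log n)$ (and, in the earlier phases, $\Theta(d)$) all the comparisons whose outcome can influence the positions near $k$. Carrying this ``bounded band'' property back through the geometrically widening early phases, and making sure that the conditioning on the max-dislocation event does not spoil the independence of the $\widetilde Z_j$ (which is why we work with the truncations rather than with $Z_j$ directly), are the technical points; everything else reduces to the routine Chernoff and union-bound estimates sketched above.
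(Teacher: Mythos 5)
Your plan coincides with the paper's for the running time and maximum dislocation, and it diverges for the total dislocation, which is indeed the only new assertion. The paper also isolates an intermediate window $w^*=\Theta(\log n)$ at which the dislocation is $O(\log n)$ w.h.p.\ (their Lemma~\ref{lemma:windowsort}), proves a ``bounded movement'' lemma (their Lemma~\ref{lem:move}) and a locality lemma (their Lemma~\ref{lem:dependent}: given $S_{w}$, the final position of $x$ depends only on comparisons among elements within $O(w)$ of $\pos(x,S_1)$ in $S_{w}$), and then decomposes \emph{elements} by true rank modulo $k=\Theta(\log n)$ into $\Theta(\log n)$ classes; conditionally on the good event, dislocations within a class are independent and bounded, and Hoeffding plus a union bound over the $\Theta(\log n)$ classes finishes. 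You instead use the identity $\sum_x\disl(x,S^\star)=\sum_k|A_k\triangle B_k|$ and decompose \emph{cuts} into blocks of length $\Theta(\log n)$, colour the blocks with $O(1)$ colours, and do Chernoff per colour class. This is a genuinely different decomposition, and the symmetric-difference reformulation is a nice way to expose locality; the $O(1)$-colour union bound is also slightly cleaner than the paper's $\Theta(\log n)$-class one.

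There is, however, a real gap in the step that is supposed to deliver independence: the claim that $\widetilde Z_j=\min\{Z_j,O(\log^2 n)\}$ ``depends only on the comparisons among the elements of true rank within $O(\log n)$ of block $j$''. The quantity $Z_j$ is a function of the final permutation $S^\star$ (through the sets $B_k$), and $S^\star$ is the outcome of a \emph{global} sorting process; truncating the value of $Z_j$ does not change which comparison outcomes it is a function of, so $\widetilde Z_j$ inherits exactly the same (a priori global) dependence. The event ``the elements near block $j$ have bounded dislocation in $S^\star$'', which is what would confine $Z_j$ to local comparisons, is itself a global event, so you cannot invoke it to localize $\widetilde Z_j$ and simultaneously claim unconditional independence across distant blocks. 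What you would need is a \emph{bona fide} local surrogate: for instance, define $\widetilde Z_j$ via a local simulation of the last phases of \windowsort{} restricted to the $O(\log n)$ elements whose true rank is near block $j$ (with padding), which is manifestly a function of only the local comparisons, and then show that on the high-probability event $G$ it coincides with $Z_j$. Alternatively, follow the paper and argue \emph{conditional} independence given $S_{w^*}$ (or given $G$), invoking Lemmas~\ref{lem:move} and~\ref{lem:dependent} to pin down which comparisons matter once $S_{w^*}$ is fixed, and apply Hoeffding to the conditional distribution. As written, the truncation does not do the work you ask of it, and this is the one step that must be repaired before the concentration bound is justified.
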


We prove this theorem in Section~\ref{sec:windowsort}. Notice that \windowsort also works in a stronger error model in which the input permutation $S$ can be chosen adversarially after the comparison errors
between all pairs of elements have been randomly fixed, as long as the total dislocation of $S$ is at most $d$.

In the remaining of this section, we assume $p<1/32$ in order to be consistent with Section~\ref{sec:noisy_binary_search}, though both the above theorem and the algorithm we are going to present will only require $p<1/16$. 
Based on the binary search in Section~\ref{sec:noisy_binary_search}, we  define an operation that allows us to add a linear number of elements to an almost-sorted sequence without any asymptotic increase in the resulting dislocation, as we will formally prove in the sequel.
More precisely, if $A$ and $B$ are two disjoint subsets of $S$, we denote by $\merge(A,B)$ the sequence obtained as follows:
\begin{itemize}
	\item For each element $x \in B$ compute and index $r_x$ such that $|\rank(s, A) - r_x| \le \alpha d$.
	This can be done using the noisy binary search of Section~\ref{sec:noisy_binary_search}, which succeeds with probability at least $1-\frac{1}{|A|^6}$.
	\item Insert \emph{simultaneously} all the elements $x \in B$ into $A$ in their computed positions $r_x$, breaking ties arbitrarily. Return the resulting sequence.
\end{itemize}

Our sorting algorithm, which we call \rifflesort (see the pseudocode in Algorithm~\ref{alg:rifflesort}), works as follows. 
For $k = \frac{\log n}{2}$, we first partition $S$ into $k+1$ subsets $T_0, T_1, \dots, T_{k}$ as follows. Each $T_i$, with $1 \leq i \leq k$,  contains $2^{i-1} \sqrt{n}$ elements chosen uniformly at random from $S \setminus \{T_{i+1},T_{i+2},\ldots,T_k\}$, 
and $T_0 = S \setminus \{T_{1},T_{2},\ldots,T_k\}$ contains the leftover $n - \sqrt{n} \sum_{i=1}^k 2^{i-1} = \sqrt{n}$ elements.
As its first step, \rifflesort will approximately sort $T_0$ using \windowsort, and then it will alternate merge operations with calls to \windowsort. While, the merge operations allow us to iteratively grow the set of approximately sorted elements to ultimately include all the elements in $S$, each operation also worsens the dislocation by a constant factor.
This is a problem since the rate at which the dislocation increases is faster than the rate at which new elements are inserted. The role of the sorting operations is exactly to circumvent this issue: each \windowsort call locally rearranges the elements, so that all newly inserted elements are now closer to their intended position, resulting in a dislocation increase that is only an additive constant.
The corresponding pseudocode is shown in Algorithm~\ref{alg:rifflesort}, in which $\gamma \ge \max\{202 \alpha, 909 \}$ is an absolute constant.

\begin{algorithm}[t]
	$T_0, T_1, \dots, T_{k} \gets$ partition of $S$ computed as explained in Section~\ref{sec:rifflesort}\;
	$S_0 \gets \windowsort(T_0, \sqrt{n})$\;
	\ForEach{$i=1,\dots,k+1$}
	{
		$S_i \gets \merge(S_{i-1}, T_{i-1})$\;
		$S_i \gets \windowsort(S_i, \gamma \cdot  c_p  \cdot \log n)$\;	
	}
	\Return $S_{k+1}$\;
	\caption{\rifflesort\unskip(S)}
	\label{alg:rifflesort}
\end{algorithm}

\subsection{Analysis}

\begin{lemma}
	\label{lemma:rifflesort_runtime}
	The worst-case running time of Algorithm~\ref{alg:rifflesort} is $O(n \log n)$.
\end{lemma}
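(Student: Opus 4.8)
The plan is to bound the cost of each line of Algorithm~\ref{alg:rifflesort} separately and sum. First I would fix notation: let $n_i = |S_i|$ be the size of the sequence after the $i$-th iteration. Since $|T_0| = \sqrt n$ and $|T_{i-1}| = 2^{i-2}\sqrt n$ for $i \ge 2$, we have $n_0 = \sqrt n$ and, by a geometric-sum argument, $n_i = \sqrt n \sum_{j=0}^{i-1} 2^{j-1}\cdot[\text{appropriately indexed}] = \Theta(2^i \sqrt n)$ for $i \ge 1$, with $n_{k+1} = n$ (using $k = \tfrac{\log n}{2}$ so that $2^k = \sqrt n$). The key quantitative fact I will extract is that the sizes grow geometrically, so any bound of the form $c \cdot n_i \log n$ or $c \cdot n_i \cdot \polylog$ summed over $i$ telescopes to $O(n\log n)$ rather than $O(n\log^2 n)$.

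Next I would bound the three kinds of work. (1) Partitioning $S$ into $T_0,\dots,T_k$ (e.g.\ by a random shuffle and slicing) takes $O(n)$ time. (2) The initial call $\windowsort(T_0,\sqrt n)$ takes, by Theorem~\ref{thm:windowsort}, $O(|T_0|\cdot \sqrt n) = O(\sqrt n \cdot \sqrt n) = O(n)$ time. (3) Inside the loop: each $\merge(S_{i-1},T_{i-1})$ performs $|T_{i-1}|$ noisy binary searches, each costing $O(\log n_{i-1}) = O(\log n)$ time by Theorem~\ref{thm:noisy_binary_search}, plus the simultaneous insertion of $|T_{i-1}|$ elements into a sequence of length $n_{i-1}$, which is $O(n_{i-1} + |T_{i-1}|) = O(n_i)$ time; so one merge costs $O(n_i \log n)$. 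Each $\windowsort(S_i, \gamma c_p \log n)$ costs, again by Theorem~\ref{thm:windowsort}, $O(n_i \cdot \log n)$ time since the dislocation parameter passed is $\Theta(\log n)$. Hence iteration $i$ costs $O(n_i \log n)$.

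Finally I would sum: the total loop cost is $\sum_{i=1}^{k+1} O(n_i \log n) = O(\log n)\sum_{i=1}^{k+1} n_i$, and since the $n_i$ grow geometrically with $n_{k+1}=n$, we get $\sum_i n_i = O(n)$, so the loop contributes $O(n\log n)$; adding the $O(n)$ for partitioning and the initial sort, the total is $O(n\log n)$. The only mild subtlety — not really an obstacle — is making sure the ``simultaneous insertion'' step in \merge can indeed be implemented in time linear in the output size (e.g.\ by bucketing the computed ranks $r_x$ and a single linear merge pass), and that we are charging the $\log n$ factor in the binary search against $|T_{i-1}| = \Theta(n_i)$ rather than against $n_{i-1}$; both are straightforward. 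I would present the size recurrence explicitly and then a one-line geometric-sum estimate to close the argument.
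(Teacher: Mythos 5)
Your proof follows essentially the same route as the paper's: bound the partitioning and the initial $\windowsort$ call by $O(n)$, charge iteration $i$ with $O(|S_i|\log n)$ by combining the $O(\log n)$ cost of each noisy binary search (Theorem~\ref{thm:noisy_binary_search}) with the $O(|S_i|\log n)$ cost of $\windowsort$ at dislocation parameter $\Theta(\log n)$, and then sum the geometric series $\sum_i |S_i| = O(n)$ to obtain $O(n\log n)$ overall. The decomposition, the key lemmas invoked, and the geometric-sum estimate all match the paper's argument.
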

\begin{proof}
	Clearly the random partition $T_0, \dots, T_k$ can be computed in time $O(n \log n)$,\footnote{The exact complexity of this steps depends on whether we are allowed to generate a uniformly random integer in a range in $O(1)$ time.
	If this is not the case, then integers can be generated bit-by-bit using rejection. It is possible to show that the total number of required random bits will be at most $O(n)$ with probability at least $1-n^{-2}$ (see Section~\ref{sec:derand_partitioning}). To maintain a worst-case upper bound on the running time also in the unlikely event that $O(n \log n)$ bits do not suffice, we can simply stop the algorithm and return any arbitrary permutation of $S$. This will not affect the high-probability bounds in presented in the sequel.} and the first call to \windowsort requires time $O(|T_0| \cdot \sqrt{n}) = O(n)$ (see Theorem~\ref{thm:windowsort}). We can therefore restrict our attention to the generic $i$-th iteration of the for loop.
	The call to $\merge(S_{i-1}, T_{i-1})$ can be performed in $O(|S_i| \log n)$ time since, for each $x \in T_{i-1}$, the required approximation of 
$\rank(x, S_{i-1})$ can be computed in time $O(\log |T_{i-1}|)$ and $|T_{i-1}|< |S_i| \le n$, while inserting the elements in $S_{i-1}$ their computed ranks requires linear time in $|S_{i-1}| + |T_{i-1}| = |S_i|$.
	The subsequent execution of \windowsort with $d = O(\log n)$ requires time $O(|S_i| \log n)$, where the hidden constant does not depend on $i$.	
	Therefore, for a suitable constant $c$, the time spent in the $i$-th iteration is $c |S_i| \log n$ and total running time of Algorithm~\ref{alg:rifflesort} can be upper bounded by:
	\[
		c \sum_{i=1}^{k+1} |S_i| \log n =  c \sqrt{n} \log n \cdot \sum_{i=1}^{k+1} 2^i 
		< 2^{k+2} c \sqrt{n} \log n  		
		= 2 c n \log n.
	\]
	This completes the proof.
\end{proof}

The following lemma, that concerns a thought experiment involving urns and randomly drawn balls, is instrumental to bounding the dislocation of the sequences returned by the $\merge$ operations. Since it can be proved using arguments that do not depend on the details of \rifflesort, we postpone its proof to the appendix. 

\begin{lemma}
	\label{lemma:draw_no_long_monocolor}
	Consider an urn containing $N=2M$ balls, $M$ of which are white and the remaining $N$ are black.
	Balls are iteratively drawn from the urn without replacement until the urn is empty. 
	If $N$ is sufficiently large and $ 9 \log N \le k \le \frac{N}{16}$ holds, the probability that any contiguous subsequence of at most $100k$ drawn balls contains $k$ or fewer white balls is at most $N^{-6}$.
\end{lemma}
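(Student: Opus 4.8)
The plan is to pass to the equivalent description of the drawing process as a uniformly random arrangement, control the white‑count of one long block by a concentration inequality for sampling without replacement, and finish with a union bound over the $O(N)$ blocks. Since the $N=2M$ balls ($M$ white and $M$ black) are drawn without replacement until the urn is empty, the observed colour sequence is a uniformly random permutation of this multiset. Equivalently, I would show that, with probability at least $1-N^{-6}$, every block of $w:=\lceil 100k\rceil$ consecutive drawn balls contains more than $k$ white balls (if $w>N$ there is no such block and nothing to prove, so assume $w\le N$; note that $w\ge 100k$). Fix a starting position $t$ and let $X_t$ be the number of white balls among the $t$-th through $(t{+}w{-}1)$-th drawn balls. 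By exchangeability of the random permutation, $X_t$ has the distribution of the number of white balls in a size-$w$ sample drawn without replacement from the urn, hence $\E[X_t]=w\cdot\frac{M}{N}=\frac{w}{2}\ge 50k$.

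The crux is a tail estimate showing that $\Pr[X_t\le k]$ is far below $1/N$. I would apply Hoeffding's inequality for sampling without replacement with deviation parameter $\lambda:=\E[X_t]-k=\frac{w}{2}-k\ge 49k$, together with the bound $w\le 101k$ (valid since $k\ge 9\log N>1$):
\[
\Pr[X_t\le k]\;\le\;\exp\!\Big(-\frac{2\lambda^2}{w}\Big)\;\le\;\exp\!\Big(-\frac{2(49k)^2}{101k}\Big)\;=\;\exp\!\Big(-\tfrac{4802}{101}k\Big)\;<\;e^{-47k}.
\]
(Alternatively, one may first dominate the hypergeometric lower tail by the corresponding binomial one and then invoke a multiplicative Chernoff bound, as in Lemma~\ref{lemma:timeout}.) Since $k\ge 9\log N=\frac{9}{\ln 2}\ln N>12\ln N$ by hypothesis, this gives $\Pr[X_t\le k]<e^{-564\ln N}=N^{-564}$; in fact only $\Pr[X_t\le k]\le N^{-7}$ is needed, so the margin is enormous.

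Finally, there are at most $N$ admissible starting positions $t$, so the union bound yields that the probability that some block of $w$ consecutive drawn balls contains at most $k$ white balls is at most $N\cdot N^{-7}=N^{-6}$, as claimed (shorter terminal blocks, or blocks of length well below $100k$, are not the concern of the lemma, whose role is precisely to forbid long almost‑monochromatic stretches). The hypothesis that $N$ be sufficiently large is used only to absorb the ceiling in $w$ and to keep the post‑union slack comfortable, while $9\log N\le k\le N/16$ merely makes the hypotheses mutually consistent. I do not foresee a genuine obstacle here: the one place that needs attention --- and the sole reason the explicit constants ($100$ in $100k$ and $9$ in $9\log N$) appear --- is ensuring that the rate of the tail bound beats $\log N$ by enough that a single union bound over the blocks still leaves room to spare. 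Intuitively, the lemma just says that a window of $100k$ draws, which contains about $50k$ white balls in expectation, is exponentially unlikely (with rate $\gg\log N$) to dip all the way down to $k$.
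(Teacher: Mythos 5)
Your proof is correct, and it takes a genuinely different and cleaner route than the paper's. You correctly read the lemma as being about \emph{long} windows: a block of $\ge 100k$ draws containing $\le k$ white balls (as confirmed by the paper's own proof of this lemma and its application in Lemma~\ref{lemma:merge_constant_disl_increase}), and you observe that it suffices to rule this out for blocks of length exactly $w=\lceil 100k\rceil$. Your key move is to note that, by exchangeability, the white count $X_t$ in any window of $w$ consecutive draws is hypergeometric with mean $w/2$, and then to invoke Hoeffding's inequality for sampling without replacement to get $\Pr[X_t\le k]\le \exp(-2(w/2-k)^2/w)\le e^{-47k}\le N^{-7}$, after which a union bound over $\le N$ starting positions finishes it. The paper instead proceeds more by hand: it fixes a starting draw $n$, splits into cases $n\le M/4$ versus $M/4\le n\le M$ (controlling in the second case, via a hypergeometric tail bound, the number of white balls already consumed before position $n$), uses this to stochastically dominate the subsequent color indicators from below by i.i.d.\ Bernoulli$(1/16)$ coins, and then bounds $\Pr[\mathrm{Bin}(100k,1/16)\le k]$ by an explicit binomial computation $(k+1)\binom{100k}{k}(15/16)^{100k}<(k+1)2^{-k}$ before applying the union bound. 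What your approach buys is brevity and no case analysis --- a single off-the-shelf concentration bound for the hypergeometric replaces both the dominance argument and the auxiliary tail bound for the ``already-drawn'' prefix. The paper's version is more elementary (it does not presuppose Hoeffding's without-replacement theorem) and makes the $1/16$ and the hypothesis $k\le N/16$ appear naturally, but it is longer; your argument makes it visible that these particular constants are not intrinsic, since $9\log N$ gives a tail of order $N^{-564}$ with room to spare.
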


We can now show that, if $A$ and $B$ contain randomly selected elements, the sequence returned by  $\merge(A,B)$ is likely to have a dislocation that is at most a constant factor larger than the dislocation of $A$:
\begin{lemma}
\label{lemma:merge_constant_disl_increase}
Let $A$ be a sequence containing $m$ randomly chosen elements from $S$ and having maximum dislocation at most $d$, with $\log n \le d = o(m)$.
Let $B$ be a set of $m$ randomly chosen elements from $S \setminus A$.
Then, for a suitable constant $\gamma$, and for large enough values of $m$, $merge(A,B)$ has maximum dislocation at most $\gamma d$ with probability at least $1-m^{-4}$.
\end{lemma}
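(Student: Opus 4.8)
The plan is to combine the success probability of the noisy binary search (Theorem~\ref{thm:noisy_binary_search}) with the urn concentration bound (Lemma~\ref{lemma:draw_no_long_monocolor}) to control how many elements of $B$ get inserted close to any one position of $A$. First I would condition on the event that every one of the $m$ invocations of noisy binary search on $A$ (one per element of $B$) succeeds; by Theorem~\ref{thm:noisy_binary_search} each fails with probability $O(|A|^{-6}) = O(m^{-6})$, so by a union bound over the $m$ elements of $B$ this bad event has probability $O(m^{-5})$, which is absorbed into the final $m^{-4}$ slack. On the good event, for each $x \in B$ the computed index $r_x$ satisfies $|r_x - \rank(x,A)| \le \alpha d$.

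Next I would fix an arbitrary element $z$ of the merged sequence $\merge(A,B)$ and bound its dislocation. Write $C = A \cup B \subseteq S$; note that $|C| = 2m$ and, since $A$ and $B$ are random disjoint subsets of $S$, $C$ is itself a uniformly random $2m$-subset of $S$ and, conditioned on $C$, the set $A$ is a uniformly random half of $C$. The true rank of $z$ in $C$ and its position in $\merge(A,B)$ differ only because (i) $A$ has dislocation up to $d$, (ii) each $x\in B$ is placed at $r_x$ rather than at its true rank in $A$, with error up to $\alpha d$, and (iii) ties among the $r_x$ are broken arbitrarily. The key quantity to control is therefore: how many elements of $C$ have true rank (in $C$) within a window of width $O(d)$ around $\rank(z,C)$ — equivalently, how many $B$-elements can ``pile up'' near $z$'s position. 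This is exactly where Lemma~\ref{lemma:draw_no_long_monocolor} enters: think of the $2m$ elements of $C$, listed in true sorted order, as a sequence of balls drawn from an urn, where an element is ``white'' if it lies in $A$ and ``black'' if it lies in $B$; this colouring is precisely a random draw without replacement of $m$ white and $m$ black balls. Taking $k = \Theta(d)$ (legitimate since $\log n \le d$ and $d = o(m)$, so $9\log(2m) \le k \le m/8$ for large $m$), the lemma gives that, except on an event of probability $(2m)^{-6}$, every contiguous window of the sorted order of $C$ of length $100k$ contains more than $k$ white ($A$-)elements; contrapositively, any interval of $k$ consecutive $A$-elements spans at most $100k = O(d)$ elements of $C$ in total, hence contains at most $O(d)$ elements of $B$.

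From this structural fact the dislocation bound for $z$ follows by a counting argument: the position of $z$ in $\merge(A,B)$ is determined by the number of $A$-elements inserted before it (which differs from $\rank(z,A)$ by $O(d)$ because $\disl(A) \le d$ and the true rank of $z$ in $A$ differs from its rank in $C$ by at most the number of intervening $B$-elements) plus the number of $B$-elements whose computed index $r_x$ places them before $z$. An element $x \in B$ with true rank in $A$ more than $O(d)$ away from $\rank(z,A)$ cannot be misplaced to the wrong side of $z$, since its $r_x$ is within $\alpha d$ of its true rank; and the number of $x \in B$ whose true rank in $A$ is within $O(d)$ of $\rank(z,A)$ is $O(d)$ by the windowing bound above. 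Summing the $O(d)$ contributions and choosing $\gamma$ large enough to dominate all the hidden constants (this is where the requirement $\gamma \ge \max\{202\alpha, 909\}$ is calibrated) gives $\disl(z, \merge(A,B)) \le \gamma d$. A final union bound over the noisy-search failure event ($O(m^{-5})$) and the urn event ($(2m)^{-6}$) yields the claimed $1 - m^{-4}$ probability.

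The main obstacle I anticipate is not the probabilistic machinery — both ingredients are already in hand — but the bookkeeping that translates ``$A$ has dislocation $\le d$'' and ``$r_x$ is within $\alpha d$ of $\rank(x,A)$'' into a clean bound on $|\pos(z,\merge(A,B)) - \rank(z,C)|$, keeping careful track of how errors from the three sources compose and making sure the constant in the window width $k$ is chosen consistently with both the $\alpha$ from Theorem~\ref{thm:noisy_binary_search} and the $100k$ factor from Lemma~\ref{lemma:draw_no_long_monocolor}, so that everything fits under a single constant $\gamma$.
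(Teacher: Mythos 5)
Your proposal is correct and follows essentially the same route as the paper: condition on every noisy binary search succeeding (union bound over $m$ calls, cost $O(m^{-5})$), apply Lemma~\ref{lemma:draw_no_long_monocolor} with $k=\Theta(d)$ to the white/black coloring of $A\cup B$ in true sorted order, and then count, for a fixed element $z$, how many $B$-elements can be placed on the wrong side of $z$, which is $O(d)$ by the windowing bound. The paper's version of the ``bookkeeping'' you flag is to define the set $Y=\{y\in B: r_x-\beta d\le \rank(y,A)\le r_x+\beta d\}$, show that only elements of $Y$ can affect $x$'s dislocation, and bound $|Y|$ by sandwiching $Y$ between two nearby $A$-elements $y^-,y^+$ so that the urn event caps the span — this is exactly the accounting your final paragraph gestures at.
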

\begin{proof}
	Let $\beta = \max\{\alpha, 9/2 \}$, $S'=\merge(A,B)$, and $S^* = \langle s_0^*, s_1^*, \dots, s_{2n-1}^* \rangle$ be the sequence obtained by sorting $S'$ according to the true order of its elements.
	Assume that:
	\begin{itemize}
		\item all the approximate ranks $r_x$, for $x \in B$, are such that $|r_x - \rank(x,A)| \le \beta d$; and
		\item all the contiguous subsequences of $S^*$ containing up to $2 \beta d+2$ elements in $A$ have length at most $200 \beta d + 200$. 
	\end{itemize}	 	
	We will show in the sequel that the above assumptions are likely to hold.
	
	Pick any element $x \in S'$. 
	We will show that our assumptions imply that the dislocation of $x$ in $S'$ is at most $201d$.
	
	An element  $y \in B$ can affect the final dislocation of $x$ in $S'$ only if one of the following two (mutually exclusive) conditions holds: (i) $y \prec x$ and $r_y \ge r_x$, or (ii) $y \succ x$ and $r_y \le r_x$.
	All the remaining elements in $B$ will be placed in the correct relative order w.r.t.\ $x$ in $S'$, and hence they do not affect the final dislocation of $x$.
	
	If (i) holds, we have:
	\[
	r_x - \beta d \le r_y - \beta d \le \rank(y, A) \le \rank(x, A) \le r_x + \beta d,
	\]
	while, if (ii) holds, we have:
	\[
	r_x -\beta d \le \rank(x, A) \le \rank(y, A) \le r_y +  \beta d \le r_x + \beta d,
	\]
	and hence, all the elements $y \in B$ that can affect the dislocation of $x$ in $S'$ are contained in the set
	$Y = \{ y \in B :  r_x - \beta d \le \rank(y, A) \le r_x + \beta d\}$.
	
	We now upper bound the cardinality of $Y$.
	Let $y^-$ be the $(r_x - \beta d - 1)$-th element of $A$;
	if no such element exists, then let $y^- = s^*_0$.
	Similarly, let $y^+$ be the $(r_x +  \beta d)$-th element of $A$; if no such element exists, then let $y^+ = s^*_{2m-1}$.
	Due to our choice of $y^-$ and $y^+$ we have that $\forall y \in Y, 
	y^- \preceq y \preceq y^+$, implying that all the elements in $Y$ appear in the contiguous subsequence $\overline{S}$ of $S^*$ having $y^-$ and $y^+$ as its endpoints. 
	Since no more than $2 \beta d+2$ elements of $A$ belong to $\overline{S}$ , our assumption guarantees that $\overline{S}$ contains at most $200 \beta d+200$ elements.
	This implies that the dislocation of $x$ in $S'$ is at most $\beta d + |Y| \le \beta d + |\overline{S}| \le 201 \beta d + 200 \le \gamma d$, where the last inequality holds for large enough $n$ once we choose $\gamma = 202 \beta$.

	To conclude the proof we need to show that our assumptions holds with probability at least $1-|S'|^{-6}$.
	Regarding the first assumption, for $x\in B$, a noisy binary search returns a rank $r_x$ such that
	$|r_x - \rank(x,A)| \le \alpha d \le \beta d$ with probability at least $1 - O(\frac{1}{m^6})$. Therefore the probability that the assumption holds is at least $1-O(\frac{1}{m^5})$.
	
	Regarding our second assumption, notice that, since the elements in $A$ and $B$ are randomly selected from $S$, we can relate their distribution in $S^*$
		with the distribution of the drawn balls in the urn experiment of Lemma~\ref{lemma:draw_no_long_monocolor}: the urn contains $N=2m$ balls each corresponding to an elements in $A \cup B$, a ball is white if it corresponds to one of the $M=m$ elements of $A$, while a black ball corresponds one of the $M=m$ elements of $B$.	
	If the assumption does not hold, then there exists a contiguous subsequence of $S^*$ of at least $200 \beta d+200$ elements that contains at most $2 \beta d+2$ elements from $A$. By Lemma~\ref{lemma:draw_no_long_monocolor} with $k=2 \beta d+2$, this happens with probability at most $(2m)^{-6}$ (for sufficiently large values of $n$).
	The claim follows by using the union bound.
\end{proof}

We can now use Lemma~\ref{lemma:merge_constant_disl_increase} and Theorem~\ref{thm:windowsort} together to derive an upper bound to the final dislocation of the sequence returned by Algorithm~\ref{alg:rifflesort}.

\begin{lemma}
	\label{lemma:rifflesort_dislocation}
	The sequence returned by Algorithm~\ref{alg:rifflesort} has maximum dislocation $O(\log n)$ and total dislocation $O(n)$ with probability $1- \frac{1}{n\sqrt{n}}$.
\end{lemma}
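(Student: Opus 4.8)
The plan is to track the maximum dislocation of the sequences $S_0, S_1, \dots, S_{k+1}$ inductively through the execution of Algorithm~\ref{alg:rifflesort}, combining Lemma~\ref{lemma:merge_constant_disl_increase} (each \merge blows dislocation up by the constant factor $\gamma$) with Theorem~\ref{thm:windowsort} (each subsequent \windowsort call brings the dislocation back down to $c_p \cdot \log n$ whenever the current dislocation is $\Omega(\log n)$). First I would establish the base case: $S_0 = \windowsort(T_0, \sqrt{n})$, and since $|T_0| = \sqrt n$ the input trivially has dislocation at most $\sqrt n$, so by Theorem~\ref{thm:windowsort} the sequence $S_0$ has maximum dislocation at most $c_p \log n$ with probability at least $1 - |T_0|^{-4} = 1 - n^{-2}$. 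For the inductive step, assume $S_{i-1}$ has maximum dislocation at most $c_p \log n$; note $|S_{i-1}| = \sqrt n \sum_{j=0}^{i-1} 2^{j-1}\cdot(\dots)$ — more simply $|S_i| = 2|S_{i-1}| = 2^i\sqrt n$, so $|S_i| \ge \sqrt n$ and $\log n \le |S_i|$ comfortably, and $d = c_p\log n = o(|S_{i-1}|)$ as required by Lemma~\ref{lemma:merge_constant_disl_increase}. That lemma, applied with $A = S_{i-1}$, $B = T_{i-1}$, $m = |S_{i-1}|$, $d = c_p\log n$, gives that $\merge(S_{i-1}, T_{i-1})$ has maximum dislocation at most $\gamma c_p \log n$ with probability at least $1 - |S_{i-1}|^{-4} \ge 1 - n^{-2}$. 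Then $\windowsort(S_i, \gamma c_p \log n)$ — and here the parameter $\gamma c_p \log n$ is exactly the dislocation bound guaranteed by \merge, which is why the algorithm passes it in — returns, by Theorem~\ref{thm:windowsort} with $d = \gamma c_p\log n$ (and since $d \ge \log n$, so $\min\{d,\log n\} = \log n$), a permutation of maximum dislocation at most $c_p \log n$ with probability at least $1 - |S_i|^{-4} \ge 1 - n^{-2}$. This closes the induction and shows $S_{k+1}$ has maximum dislocation $O(\log n)$.

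For the total dislocation I would invoke the total-dislocation half of Theorem~\ref{thm:windowsort}: conditioned on the final \windowsort call (the one producing $S_{k+1}$ inside the $i=k+1$ iteration) receiving an input of dislocation at most $\gamma c_p \log n$ — which holds on the same good event from the \merge analysis above — the output has total dislocation at most $c_p |S_{k+1}| = c_p \cdot 2n = O(n)$. The key point is that the total dislocation bound is not propagated through the recursion; it is simply read off from the guarantee of the very last \windowsort invocation, whose input is controlled by the very last \merge step. Hence on the good event both the $O(\log n)$ maximum dislocation and the $O(n)$ total dislocation hold simultaneously.

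It then remains to collect the failure probabilities via a union bound. There are $O(k) = O(\log n)$ iterations, each contributing a \merge failure probability of at most $|S_{i-1}|^{-4}$ and a \windowsort failure probability of at most $|S_i|^{-4}$; since $|S_i| \ge \sqrt n$ for all $i$, each term is at most $n^{-2}$, plus the base-case term $n^{-2}$. Summing, the total failure probability is $O(\log n \cdot n^{-2}) = o(n^{-3/2})$, which is at most $\frac{1}{n\sqrt n}$ for large $n$. The main obstacle — or rather the point most needing care — is verifying that the hypotheses of Lemma~\ref{lemma:merge_constant_disl_increase} are genuinely met at every step: one must check that $T_{i-1}$ and $S_{i-1}$ consist of elements drawn uniformly at random from $S$ (which follows from the construction of the partition: $T_{i-1}$ is a uniform random subset of $S \setminus (T_i \cup \dots \cup T_k)$, and by symmetry of the partitioning process $S_{i-1}$'s underlying element set is a uniform random subset of the appropriate size), that $|S_{i-1}| = |T_{i-1}|$ so the urn argument with $M = m$ applies, and that $d = c_p\log n = o(|S_{i-1}|)$ and $d \ge \log n$. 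A secondary subtlety is that Lemma~\ref{lemma:merge_constant_disl_increase} requires "large enough $m$", so one should note that for the first few (constantly many) iterations where $|S_{i-1}|$ is small the dislocation is trivially $O(\sqrt n) = O(\log n)$-bounded after a direct argument, or absorb those into the constants; I would remark on this briefly rather than belabor it.
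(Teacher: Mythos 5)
Your proposal is correct and takes essentially the same route as the paper: an induction over the iterations of \rifflesort in which Lemma~\ref{lemma:merge_constant_disl_increase} blows the maximum dislocation up by a factor of~$\gamma$ and Theorem~\ref{thm:windowsort} resets it to $c_p\log n$, followed by a union bound over the $O(\log n)$ iterations (each of size at least $\sqrt{n}$, hence per-iteration failure probability $O(n^{-2})$). The only organizational difference is that the paper bundles both conditions (maximum dislocation $\le c_p\log n$ and total dislocation $\le c_p|S_i|$) into a single notion of a ``good'' iteration and propagates both inductively, whereas you correctly observe that only the maximum-dislocation bound is needed for the inductive step and simply read off the total-dislocation guarantee from the final \windowsort call; these are logically equivalent, and your formulation is, if anything, a touch cleaner.
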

\begin{proof}
For $i = 1, \dots, k+1$, we say that the $i$-th iteration of Algorithm~\ref{alg:rifflesort} is \emph{good} if the sequence $S_i$ computed at its end
has both (i) maximum dislocation at most $c_p \log n$, and (ii) total dislocation at most $c_p |S_i|$.
As a corner case, we say that iteration $0$ is good if the sequence $S_0$ also satisfies conditions (i) and (ii) above.

We now focus on a generic iteration $i\ge 1$  and show that, assuming that iteration $i-1$ is good, iteration $i$ is also good with probability at least $1-\frac{1}{n^2}$.
Since iteration $0$ is good with probability at least $1-\frac{1}{|S_0|^4}$ = $1- \frac{1}{n^2}$ and there are $k + 1 = O(\log n)$ other iterations, the claim will follow by using the union bound.

Since iteration $i-1$ was good, the sequence $S_{i-1}$ has maximum dislocation $c_p \log n$
and hence the sequence resulting from call to $\merge(S_{i-1}, T_{i-1})$ returns a sequence with dislocation at most $\gamma c_p \log n$ with probability at least $1-\frac{1}{|T_{i-1}|^4} \ge 1- \frac{1}{n^2}$.
If this is indeed the case, we have that the sequence $S_i$ returned by the subsequent call to \windowsort satisfies (i) and (ii) with probability at least $1 - \frac{1}{|S_{i+1}|^4} \ge 1 - \frac{1}{n^2}$. The claim follows by using the union bound and by noticing that the returned sequence is exactly $S_{k+1}$.
\end{proof}

We have therefore proved the following result, which follows directly from Lemma~\ref{lemma:rifflesort_dislocation} and Lemma~\ref{lemma:rifflesort_runtime}:
\begin{theorem}
	\rifflesort is a randomized algorithm that approximately sorts, in $O(n \log n)$  worst-case time, $n$ elements subject to random persistent comparison errors so that the maximum (resp. total) dislocation of the resulting sequence	is $O(\log n)$ (resp. $O(n)$), w.h.p.
\end{theorem}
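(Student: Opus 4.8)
The plan is simply to combine the two preceding lemmas. First I would invoke Lemma~\ref{lemma:rifflesort_runtime}, which bounds the worst-case running time of Algorithm~\ref{alg:rifflesort} by $O(n\log n)$ regardless of the realized comparison errors and of the algorithm's internal coin tosses; this settles the time complexity with no probabilistic caveat. Next I would invoke Lemma~\ref{lemma:rifflesort_dislocation}, which guarantees that the returned sequence $S_{k+1}$ simultaneously has maximum dislocation $O(\log n)$ and total dislocation $O(n)$ with probability at least $1-\frac{1}{n\sqrt{n}}$. Since $n^{-3/2}\to 0$, this probability is $1-o(1)$, i.e.\ the dislocation bounds hold with high probability, which is exactly the statement of the theorem.

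Because both ingredients are already established, no genuinely new argument is needed at this point; the substance of the result is distributed across the earlier lemmas. The crux is Lemma~\ref{lemma:rifflesort_dislocation}, proved by induction over the $k+1$ iterations of the \textbf{for} loop: each iteration alternates a $\merge$ step --- analyzed via Lemma~\ref{lemma:merge_constant_disl_increase}, which itself rests on the urn/ball estimate of Lemma~\ref{lemma:draw_no_long_monocolor} and on the noisy-binary-search guarantee of Theorem~\ref{thm:noisy_binary_search} --- with a call to \windowsort, analyzed via the strengthened Theorem~\ref{thm:windowsort}. The merge step inflates the maximum dislocation by the constant factor $\gamma$, and the subsequent \windowsort call pulls it back down below $c_p\log n$; the key quantitative point is that the geometric growth of the sets $T_i$ is slow enough relative to this ``expand-then-contract'' cycle that the dislocation remains $\Theta(\log n)$ throughout, while a union bound over the $O(\log n)$ iterations (each failing with probability $O(n^{-2})$) keeps the overall failure probability at $O(n^{-3/2})$. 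The running-time bound of Lemma~\ref{lemma:rifflesort_runtime} is then immediate, as each of the $O(\log n)$ iterations costs $O(|S_i|\log n)$ and $\sum_i |S_i|$ is a geometric series dominated by its last term $|S_{k+1}|=n$.

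If I had to single out the main obstacle, it would be ensuring that the dislocation does not drift upward across iterations: this is precisely why the ``good iteration'' invariant in Lemma~\ref{lemma:rifflesort_dislocation} is phrased so that a good iteration $i-1$ \emph{implies} a good iteration $i$, rather than bounding each iteration in isolation, and why the constant $\gamma$ must be chosen large enough to absorb both the $\alpha$ coming from the noisy binary search and the $\Theta(\beta)$ coming from the long-monochromatic-run bound, yet the re-sorting target $\gamma c_p\log n$ passed to \windowsort must still be something \windowsort is guaranteed to reduce back to $c_p\log n$. Once that delicate balance of constants is in place, the final theorem is a one-line consequence of the two lemmas.
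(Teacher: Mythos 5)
Your proposal is correct and matches the paper exactly: the theorem is stated as a direct consequence of Lemma~\ref{lemma:rifflesort_runtime} (worst-case running time) and Lemma~\ref{lemma:rifflesort_dislocation} (w.h.p.\ dislocation bounds), and your additional commentary on the underlying structure of those lemmas is accurate though not strictly needed at this point.
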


\section{WindowSort}\label{sec:windowsort}

To make our algorithm description self-contained, and in order to prove Theorem~\ref{thm:windowsort} (thus strengtening the result of\cite{geissmann_et_al}), Algorithm~\ref{alg:windowsort} reproduces (a slightly modified version of) the pseudocode of \windowsort.
\windowsort receives in input a sequence $S$ of $n$ elements, and an additional upper bound $d$ on the initial dislocation of $S$. The original \windowsort algorithm corresponds to the case $d=n$.

\begin{algorithm}[t]
	$S_{2d} \gets S$\;
	\ForEach{$w=2d, d, d/2, \dots, 2$}
	{
		Let $\langle x_0, \dots, x_{n-1} \rangle$ be the elements in $S_{w}$\;
		\ForEach{$x_i \in S_w$}
		{
			$\score_w(x_i) \gets \max\{0, i - w\} + \{ x_j < x_i \, : \, |j-i| \le w \}$
		}
		
		$S_{ w/2 } \gets $ sort the element of $S_{w}$ by non-decreasing value of $\score_w( \cdot )$\;
	}
	\Return $S_1$
	\caption{\windowsort\unskip($S,d$)}
	\label{alg:windowsort}
\end{algorithm}

\windowsort iteratively computes a collection $\{S_{2d},S_{d}, S_{d/2},\dots, S_1\}$ of permutations of $S$:
at every step, \windowsort maintains a \emph{window size} $w$ and builds $S_{w}$ from $S_{2w}$.
Intuitively, for the algorithm to be successful, we would like $S_w$ to be a permutation of $S$ having maximum dislocation at most $w/2$, w.h.p. Even though this is true in the beginning (since we initially set $w=2d$), this property only holds up to a certain window size $w^*=\Theta(\log n)$.
Nevertheless, it is still possible to show that the maximum dislocation of the returned sequence is $\Theta(\log n)$ and that the
expected dislocation of each element is constant.
We summarize the above discussion in the following lemma, which follows from the same arguments used in the proofs of Theorems~9 and 14 in \cite{geissmann_et_al}: 

\begin{lemma}
	\label{lemma:windowsort}
	Let $S$ be a sequence of $n$ element having maximum dislocation at most $d$. Then, with probability $1-\frac{1}{n^5}$, the following properties hold: (i) there exists a window size $w^* = \Theta(\log n)$ such that $\disl(S_{w^*})=O(\log n)$; and (ii) $\E[\disl(x,S_1)] = O(1)$. 
	All the hidden constants depend only on $p$.
\end{lemma}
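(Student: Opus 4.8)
The plan is to derive both parts of the lemma from a single exact description of the score function, following the analysis of \cite{geissmann_et_al}. First I would record how scores relate to true ranks. Fix a window size $w$ and suppose $\disl(S_w)\le w/2$. For an element $x$ sitting at position $i$ in $S_w$, every element at a position $\le i-w-1$ has true rank at most $(i-w-1)+w/2<i-w/2\le\rank(x)$, so it is truly smaller than $x$; symmetrically every element at a position $\ge i+w+1$ is truly larger. Hence the boundary correction $\max\{0,i-w\}$ in $\score_w(x)$ counts \emph{exactly} the elements to the left of $x$'s window, all of which are $\prec x$, and we obtain the identity $\score_w(x)=\rank(x)+Z_x$ with $Z_x=\sum_{j\in\mathrm{window}(x)}\big(\mathbf 1[x_j<x]-\mathbf 1[x_j\prec x]\big)$. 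Each summand is nonzero only when the comparison between $x$ and $x_j$ is erroneous, and since $\disl(S_w)\le w/2$ forces every window element of $x$ to have true rank within $2w$ of $\rank(x)$, we get the deterministic bound $|Z_x|\le E_{x,w}$, where $E_{x,w}$ denotes the number of erroneous comparisons between $x$ and the \emph{fixed} set of elements whose true rank differs from $\rank(x)$ by at most $2w$.

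Second, I would run the concentration and the induction for part~(i). Since the error pattern is fixed in advance, $E_{x,w}$ is a sum of at most $O(w)$ independent Bernoulli$(p)$ variables, so a Chernoff bound gives $\Pr[E_{x,w}>w/8]\le e^{-\Omega(w)}$, with the constant in $\Omega(\cdot)$ depending on $p$; this is where $p<\frac1{16}$ enters. Choosing $w^*=\Theta(\log n)$ with a sufficiently large constant and taking a union bound over all $n$ elements and all $O(\log n)$ window sizes $w\ge w^*$, we obtain a good event $\mathcal E$ with $\Pr[\mathcal E]\ge 1-n^{-5}$ on which $E_{x,w}\le w/8$ for every element $x$ and every $w\ge w^*$. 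On $\mathcal E$ I would prove, by induction on decreasing $w$ from $2d$ down to $w^*$, that $\disl(S_w)\le w/2$: the base case is $\disl(S_{2d})=\disl(S)\le d$, and in the inductive step the identity $\score_w(x)=\rank(x)+Z_x$ together with $|Z_x|\le E_{x,w}\le w/8$ for all $x$ places each element within $2\cdot(w/8)=w/4$ of its true rank in $S_{w/2}$. Taking care of the $O(1)$ slack coming from integer rounding and from sequence boundaries (and, if one only assumes $p<\frac1{16}$ rather than $p<\frac1{32}$, replacing the clean factor-$2$ decrease by a slightly weaker geometric decrease with the same endpoint) is the routine bookkeeping carried out in \cite[Thm.~9]{geissmann_et_al}. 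This yields $\disl(S_{w^*})\le w^*/2=O(\log n)$, i.e.\ part~(i), and a short addendum shows that on $\mathcal E$ the dislocation remains $O(\log n)$ for all subsequent windows as well.

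Third, for part~(ii) I would establish a per-element exponential tail. Fix an element $x$ and a value $t\ge 1$. For $t\le c'\log n$, the event $\disl(x,S_1)\ge t$ can only occur if at the last window size $v$ at which $x$'s local dislocation exceeded the threshold of the invariant (so $v=O(t)$) the corresponding score error was $\Omega(t)$; since $x$'s window at that point still contains only $O(t)$ elements, this score error is dominated by $O(t)$ erroneous comparisons and so has probability $e^{-\Omega(t)}$, and summing the geometrically decreasing contributions over the $O(\log t)$ relevant window sizes leaves the bound $e^{-\Omega(t)}$. For $t>c'\log n$, part~(i) rules the event out on $\mathcal E$, so such $t$ contributes only through $\overline{\mathcal E}$, where $\disl(x,S_1)\le n$ trivially and $\Pr[\overline{\mathcal E}]\le n^{-5}$. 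Hence $\E[\disl(x,S_1)]=\sum_{t\ge1}\Pr[\disl(x,S_1)\ge t]\le\sum_{t\ge1}e^{-\Omega(t)}+n\cdot n^{-5}=O(1)$, which is part~(ii); all hidden constants depend only on $p$, matching \cite[Thm.~14]{geissmann_et_al}.

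The step I expect to be the main obstacle is the independence issue underlying the concentration bound: the window of $x$ in $S_w$ is itself a random set, determined by previously observed (and possibly erroneous) comparisons, so one cannot pretend the comparisons feeding $Z_x$ are fresh. The way around it — and the reason the argument is clean — is to never condition on the permutation $S_w$, but instead keep the error pattern fixed in advance and bound $|Z_x|$ by the deterministic quantity $E_{x,w}$; $E_{x,w}$ over-counts, but it is a genuine sum of independent Bernoullis, so a single union bound over the $n\cdot O(\log n)$ pairs (element, window) does the job. A secondary, purely computational nuisance is checking that the constants actually close the induction for every $p<\frac1{16}$ rather than only $p<\frac1{32}$, which is handled by tracking a geometric-but-not-necessarily-halving decrease of the dislocation, exactly as in \cite{geissmann_et_al}.
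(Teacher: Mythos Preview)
Your proposal is essentially correct and takes the same approach the paper invokes: the paper does not give its own proof of Lemma~\ref{lemma:windowsort} but simply states that it ``follows from the same arguments used in the proofs of Theorems~9 and 14 in \cite{geissmann_et_al}'', and your sketch faithfully reconstructs those arguments (the score identity $\score_w(x)=\rank(x)+Z_x$, the bound $|Z_x|\le E_{x,w}$ with $E_{x,w}$ a sum of independent Bernoulli$(p)$ variables over a fixed rank-window, the Chernoff plus union bound over all elements and window sizes, the inductive halving of the dislocation down to $w^*=\Theta(\log n)$ for part~(i), and the per-element exponential tail summed to a constant for part~(ii)). Your emphasis on decoupling $E_{x,w}$ from the random permutation $S_w$ is exactly the device that makes the analysis go through, and it is the same one used in the cited reference.
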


In the following, we shall prove that the total dislocation of the sequence returned by \windowsort is linear with high probability.
We start by providing an upper bound to the change in position of an element between different iterations of \windowsort. 
This will also immediately imply that an element can only move by at most $O(w)$ positions between $S_{w^*}$ and $S_1$.

\begin{lemma}\label{lem:move}
	For every $x \in S$, $|  pos(x, S_{w}) - pos(x, S_{w'}) | \le 4 |w-w'|$.	
\end{lemma}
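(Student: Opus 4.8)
The plan is to first bound the change of position of a single element between two \emph{consecutive} permutations $S_w$ and $S_{w/2}$ produced by \windowsort, and then obtain the general statement by summing a geometric series. Concretely, I will show that $|\pos(x,S_{w/2})-\pos(x,S_w)|\le 2w$ for every element $x$ and every window size $w$ used by the algorithm. Since the window size halves at each iteration, the "path" from $S_w$ to $S_{w'}$ (assume without loss of generality $w>w'$, the case $w=w'$ being trivial) passes through the windows $w,\,w/2,\,\dots,\,2w'$, so by the triangle inequality
\[
	|\pos(x,S_w)-\pos(x,S_{w'})|\ \le\ 2\bigl(w+\tfrac{w}{2}+\dots+2w'\bigr)\ =\ 2(2w-2w')\ =\ 4|w-w'| ,
\]
which is exactly the claimed bound.

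To prove the per-step bound, fix $w$, let $\langle x_0,\dots,x_{n-1}\rangle$ be the elements of $S_w$, and write $s(\cdot)=\score_w(\cdot)$. The key observation is that $s(x_i)$ is sandwiched around $i$: the window $\{j:|j-i|\le w\}$ contains at most $\min\{i,w\}$ indices smaller than $i$, and at most $w$ indices larger than $i$; combining this with the identity $\max\{0,i-w\}+\min\{i,w\}=i$ gives
\[
	\max\{0,i-w\}\ \le\ s(x_i)\ \le\ i+w .
\]
Because $S_{w/2}$ is obtained by sorting $S_w$ by non-decreasing $s(\cdot)$, the new position of $x_i$ is the number of elements placed before it, which — regardless of the tie-breaking rule — is at most $|\{j:s(x_j)\le s(x_i)\}|-1$ and at least $n-|\{j:s(x_j)\ge s(x_i)\}|$. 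Using the sandwich bounds, $s(x_j)\le s(x_i)\le i+w$ forces $\max\{0,j-w\}\le i+w$, hence $j\le i+2w$, so at most $i+2w+1$ indices satisfy $s(x_j)\le s(x_i)$; symmetrically $s(x_j)\ge s(x_i)\ge\max\{0,i-w\}$ together with $s(x_j)\le j+w$ forces $j\ge i-2w$. Therefore $i-2w\le\pos(x_i,S_{w/2})\le i+2w$, i.e.\ $|\pos(x_i,S_{w/2})-\pos(x_i,S_w)|\le 2w$, completing the argument.

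I expect the main (though minor) obstacle to be the careful handling of boundary effects and of tie-breaking in the sort: near the ends of the sequence the comparison window is truncated, so one cannot simply bound the number of smaller elements in the window by $2w$ — it is precisely the identity $\max\{0,i-w\}+\min\{i,w\}=i$ that makes the clean bound $s(x_i)\le i+w$ hold uniformly and keeps the per-step constant equal to $2$ (any larger constant here would blow up the final bound beyond $4|w-w'|$). Tie-breaking is dealt with for free by the "$s(x_j)\le s(x_i)$" / "$s(x_j)\ge s(x_i)$" counting, which is insensitive to how ties are resolved.
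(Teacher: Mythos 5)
Your proof is correct and follows essentially the same route as the paper: both establish the per-step bound $|\pos(x,S_{w/2})-\pos(x,S_w)|\le 2w$ via the score sandwich $\max\{0,i-w\}\le\score_w(x_i)\le i+w$ (the paper phrases it as $\pos(y)-w\le\score_w(y)\le\pos(y)+w$ and compares scores directly) and then sum the halving geometric series to get $4|w-w'|$. Your explicit treatment of tie-breaking is a minor refinement, but the underlying counting argument is the same.
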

\begin{proof}
	Without loss of generality let $w' < w$ (the case $w'>w$ is symmetric, and the case $w'=w$ is trivial).
	Consider a generic iteration of \windowsort corresponding to window size $w'' < w$.
	Let $i=\pos(x, S_{w})$ and $\Delta_{w''} = |i - pos(x, S_{w''/2})|$.
	
	\noindent For every element $y$ such that $pos(y, S_{w''}) < i-2w''$ we have:
	\[
	\score_{w''}(x) \ge i - w'' > \pos(y, S_{w''}) + w'' \ge \score_{w''}(y), 
	\]
	implying that $pos(x, S_{w''/2}) \ge  i-2w''$.
	Similarly, for every element $y \in S$ such that $pos(y, S_{w''}) > i + 2w''$:
	\[
	\score_{w''}(x) \le i+w'' < \pos(y, S_{w''})-w'' \le \score_{w''}(y),
	\]
	showing that the position of $x$ in $S_{w''/2}$ is at most $i+2w''$.
	We conclude that $\Delta_{w''} \le 2w''$, which allows us to write:
	\begin{multline*}
	|i - pos(x_i, S_{w'}) | = \Delta_{w} + \Delta_{w/2} + \Delta_{w/4} + \dots + \Delta_{2w'} \\  
	\le 2w + w + w/2 + \dots + 4w' = 4w - (2w' + w' + w'/2 + \dots) 
	= 4w - 4w'. 	
	\end{multline*}
\end{proof}

The previous lemma also allows us to show that, once the window size $w$ is sufficiently small, the final position of an element only depends on a small subset of nearby elements. This, in turn, will imply that, once $S_w$ is fixed, the final positions of distant elements in $S_1$ are conditionally independent. The above property is formally shown in the following:

\begin{lemma}\label{lem:dependent}
	Let $x \in S$.
	Given $S_{w}$,  $pos(x, S_1)$ only depends on comparisons involving elements in positions $pos(x, S_1)-6w, \dots, pos(x, S_1)+6w$ in $S_{w}$.
\end{lemma}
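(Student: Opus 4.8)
The plan is to argue by a "cone of influence" (causal backward) analysis: I will show that determining $\pos(x,S_1)$ from $S_w$ requires knowing the scores of $x$ and of the elements that can reach $x$'s final position, and that each score computation at window size $w'' \le w$ only looks at elements within distance $w''$ in $S_{w''}$, while Lemma~\ref{lem:move} bounds how far such elements could have drifted from their position in $S_w$. First I would fix $S_w$ and let $q = \pos(x, S_1)$. Since the algorithm produces $S_{w/2}, S_{w/4}, \dots, S_1$ deterministically given $S_w$ and the comparison results, it suffices to exhibit a set $P$ of positions in $S_w$ such that the elements occupying those positions, together with all comparison outcomes among them, determine $\pos(x,S_1)$; I will take $P = \{q - 6w, \dots, q+6w\}$.

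The key steps, in order: (1) Let $E$ be the set of elements $y$ with $|\pos(y,S_w) - \pos(x,S_w)| \le$ some radius $\rho$ to be fixed; I will show that if $\rho$ is chosen large enough (a constant times $w$), then for every intermediate window size $w''$ and every element $y \in E$, the window of $2w''+1$ elements used to compute $\score_{w''}(y)$ consists entirely of elements of $E$. This is the heart of the argument and uses Lemma~\ref{lem:move}: between $S_w$ and $S_{w''}$ (with $w'' < w$) an element moves by at most $4|w - w''| < 4w$ positions, so the $S_{w''}$-neighbors within distance $w''$ of $y$ were, in $S_w$, within distance $w'' + 8w < 9w$ of $\pos(y,S_w)$. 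Iterating this observation down the chain of window sizes, and noting the windows shrink geometrically, the total drift telescopes to a constant multiple of $w$; choosing $\rho$ to absorb this constant makes $E$ closed under "taking the $S_{w''}$-window" at every level. (2) Since scores, and hence the sorting that produces $S_{w''/2}$ restricted to $E$, depend only on elements of $E$ and their mutual comparisons, the relative order of the elements of $E$ in each $S_{w''}$ — in particular the relative order in $S_1$, and thus the offset of $x$ from the other $E$-elements in $S_1$ — is determined by $S_w$ restricted to $E$ plus the comparisons internal to $E$. (3) Finally, translate the bound on $\rho$ (measured around $\pos(x,S_w)$) into a bound around $q = \pos(x,S_1)$: by Lemma~\ref{lem:move}, $|\pos(x,S_w) - q| \le 4w$, so a window of radius $\rho$ about $\pos(x,S_w)$ is contained in a window of radius $\rho + 4w$ about $q$; picking constants so that $\rho + 4w \le 6w$ gives the stated $\pos(x,S_1) \pm 6w$.

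I expect the main obstacle to be step (1): pinning down the radius $\rho$ so that $E$ is genuinely closed under the score-window operation at every level simultaneously, rather than level-by-level with a growing radius. The subtlety is that each score computation looks outward by $w''$ in $S_{w''}$, and elements near the boundary of $E$ may have drifted inward or outward relative to $S_w$; one must check that the geometric decay of the window sizes $w, w/2, w/4, \dots$ makes the accumulated outward reach a convergent (hence constant) multiple of $w$, and that the constant — combined with the $4w$ slack from step (3) — still fits under the $6w$ budget. This is a routine but careful telescoping estimate of the same flavor as the computation in the proof of Lemma~\ref{lem:move}, and I would carry it out by bounding, uniformly over intermediate window sizes $w''$, the distance in $S_w$ between $\pos(x,S_w)$ and any element that can influence $\score_{w''}(\,\cdot\,)$ for an element already known to influence $\pos(x,S_1)$.
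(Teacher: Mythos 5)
Your high-level plan---a cone-of-influence argument powered by Lemma~\ref{lem:move} and a geometric telescope over window sizes---is the same idea the paper uses, so the direction is right. But your specific formalization has two problems that together block the $6w$ bound.

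First, the ``fixed closed set $E$'' cannot exist. You ask for a set $E$ of positions around $\pos(x,S_w)$ such that, for \emph{every} intermediate window size $w''$ and \emph{every} $y \in E$, the $2w''{+}1$ window used to compute $\score_{w''}(y)$ lies entirely inside $E$. But an element $y$ sitting at the boundary of $E$ in $S_{w''}$ necessarily has a window reaching $w''$ positions \emph{outside} $E$, so no bounded interval (short of the whole sequence) is closed under this operation. What one really needs is a \emph{nested shrinking} family of cones, one per level, not a single level-uniform set.

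Second, the constants cannot be made to work around the fixed center $\pos(x,S_w)$. Your own intermediate estimate (``within $w'' + 8w < 9w$'') already exceeds $9w$ at a single level, yet your step~(3) requires $\rho + 4w \le 6w$, i.e.\ $\rho \le 2w$. Even setting aside the closure issue, centering at $\pos(x,S_w)$ and then translating to $\pos(x,S_1)$ at the end pays the drift penalty twice: once to grow $\rho$ so the cone captures where $x$ ends up, and once more for the final $4w$ shift. The paper avoids this entirely by inducting on $t$ with radii $r_t = 6w(1-2^{-t})$ that are centered at the \emph{current} position $\pos(x,S_{w/2^t})$: at each level the cone grows by $3w/2^t$ (window $w/2^t$ plus drift $2w/2^t$ from Lemma~\ref{lem:move}), which exactly matches $r_{t+1}-r_t$, and as $t\to\infty$ the center converges to $\pos(x,S_1)$ with $r_t \to 6w$. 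Tracking a drifting center with a shrinking per-level increment is the essential trick; without it the bookkeeping overshoots $6w$.
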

\begin{proof}
	Let $w' = w/2^t$, for $t=1, 2, \dots$, and let $S_w'=\langle x_0,\dots,x_{n-1}\rangle$.
	We prove by induction on $t$, that $i=pos(x_i,S_{w'})$ only depends on the comparison results with elements in positions $i-r_t, \dots, i+r_t$ in $S_w$, where $r_t = 6 w (1 - 2^{-t} )$.
	
	Base case $t=1$: By Lemma~\ref{lem:move} we have that $ i - 2w \le pos(s_i, S_{w}) \le i + 2w$.
	Therefore, $x_i$ is can be only compared to elements $x_j$ such that $i - 3w \le pos(x_j, S_{w}) \le i + 3w$

	Suppose now that the claim is true for some $t \ge 1$.
	We show that the claim also holds for $t+1$.
	Indeed, by Lemma~\ref{lem:move}, we have that $ i - 2w/2^t \le pos(x_i, S_{w/2^t}) \le i + 2w/2^t$
	and hence it is only compared to $x_j$s such that 
	$i - 3w/2^t \le pos(x_j, S_{w/2^t}) \le i + 3 w/2^t$.	
	
	By induction hypothesis, these elements depend only on elements in positions
	$3w/2^t + 6w - 6w 2^{-t}
	= 6w - 3w/2^t
	= 6w (1-2^{-(t+1)})$.
\end{proof}

\noindent We are finally ready to prove Theorem~\ref{thm:windowsort} used in section \ref{sec:rifflesort}, that we restate here for convenience.

\windowsortthm*
\begin{proof}
	First of all notice that Lemma~\ref{lem:move}, ensures that the final dislocation of each element in $S_1$ will be at most $d + 4w$ where $w=2d$ is the initial window size of \windowsort, thus implying the $c_p \cdot d$ bound on the final maximum dislocation.

	We will condition on the event that for some $w^* = \Theta( \log n)$, $S_{w^*}$ has maximum dislocation $\delta = O(\log n)$. Let $G$ be the indicator random variable that describes this event, i.e., $G=1$ if the event happens. 
	Clearly, by Lemma~\ref{lemma:windowsort},
	we have that $P(G=1) \ge 1 - \frac{1}{n^5}$, implying that the final maximum dislocation will be at most $\delta + 4 w^* = O(\log n)$ with the same probability. Therefore, we now only focus on bounding the total dislocation of $S_1$.

	Let $S_1= \langle x_0,\dots,x_{n-1}\rangle $. 
 	and observe that $\E[\disl(x_i,S_1)\mid G=1] = O(1)$. Indeed, by Lemma~\ref{lemma:windowsort},
	$\E[\disl(x_i,S_1)] \le c$ for all $x_i \in S$ and some constant $c>0$ depending only on $p$. Therefore:
	\begin{align*}
	c & \ge \E[\disl(x_i, S_1)]  =  \E[\disl(x_i,S_1)\mid G=1]\cdot P(G=1) + \E[\disl(x_i,S_1)\mid G=0]\cdot P(G=0)\\
	&\ge \E[\disl(x_i,S_1)\mid G=1]\cdot \left(1-\frac{1}{n^5}\right),
	\end{align*}	
	implying that $\E[\disl(x_i,S_1)\mid G=1] \le \frac{32}{31}c$ for all $n \ge 2$.

	We partition the elements of $S$ into $k = 20w^* + 4\delta$ sets, $P^{(0)},\dots,P^{(k-1)}$, 
	such that $P^{(j)} = \{x \in S \, : \, \rank(x, S) = j \pmod{k} \}$.
	We now show that the final positions of two elements belonging to the same set are conditionally independent on $G=1$.
	Indeed, assuming $G=1$ and using Lemma~\ref{lem:move}, we have:
	\[
		|\pos(x, S_1) - \rank(x, S)| \le |\pos(x, S_1) - \pos(x, S_{w^*})| + |\pos(x, S_{w^*}) - \rank(x, S)| <  4{w^*} + \delta,
	\]
	and, using again that $G=1$ together with Lemma~\ref{lem:dependent}, we know that
	$\pos(x, S_1)$  only depends on comparisons between elements in $\{ y \in S \, : \, | \pos(x, S_1) - \pos(y, S_{w^*})| \le 6{w^*} \}$ in $S_{w^*}$ which, in turn,
	is a subset of $\{ y \in S \, : \, |pos(x, S_1) - \rank(y, S) | \le 6{w^*} + \delta \}$.
	Combining the previous properties, we have that $pos(x, S_1)$ 
	only depends on $\{ y \in S \, : \, |\rank(x, S) - \rank(y, S) | < 10{w^*} + 2\delta \}$,
	implying that two elements belonging to the same set depend on different comparisons results.

Observe now that each set has size at least $\lfloor \frac{n}{k} \rfloor $ and at most $\lceil \frac{n}{k} \rceil = O(\frac{n}{\log n})$, define
	$D^{(j)} = \sum_{x_i\in P^{(j)}} \disl(x_i,S_1)$ to be the total dislocation of all elements in $P^{(j)}$, and let $\mu^{(j)} =
	\E[{D^{(j)}\mid G=1}] = \sum_{x \in P^{(j)}} \E[\disl(x, S_1) \mid G=1] \le \frac{32 c n}{31 k}$.
	
	We will use Hoeffding's inequality to prove that  $D^{(j)}\le \frac{2 c n}{k}$ with high probability.
	Hoeffding's inequality is as follows: for independent random variables $X_1,\dots,X_n$, such that $X_i$ is in $[a_i,b_i]$, and $X = \sum_i X_i$, 
	\[
		P(X - \E[X]\ge t) \le \exp\left(-\frac{2t^2}{\sum_{i=1}^{n}(b_i-a_i)^2}\right)\, .
	\]
		
	\noindent Hence, since each $\disl(x_i, S_1)$ is between $0$ and $\delta + 4w^* = O(\log n)$ when $G=1$, we have:
\[
	P\left(D^{(j)} - \mu^{(j)} \ge \frac{30 c n}{31 k} \, \Big| \, G=1 \right) \le 
	\exp\left(-\frac{1800}{961} \cdot \frac{\frac{n^2}{c^2 \delta^2}}{ \lceil \frac{n}{k} \rceil (\delta+ 4w^*)^2}\right)
	= \exp \left(-\Omega\left(\frac{n}{\log^3 n}\right)\right).
\]
	
	Finally, by using the union bound over all $k = \Theta(\log n)$ sets and on the event $G=0$, we get that 
	$\disl(S_1) \ge 2 c n$	
	with probability at most $ O(\log n) \cdot  \exp \left(-\Omega\left(\frac{n}{\log^3 n}\right)\right) + \frac{1}{n^5}$, which is at most $\frac{1}{n^4}$ for sufficiently large values of $n$.
\end{proof}

\section{Derandomization}
\label{sec:derand}

\subsection{Partitioning $S$}
\label{sec:derand_partitioning}

In order to run \rifflesort we need to partition the input sequence $S$ into a collection of random sets $T_0, T_1, \dots, T_k$ where $k= \frac{\log n}{2}$ and each $T_i$ contains $m = \sqrt{n} \cdot 2^{i-1}$ elements that are chosen uniformly at random from
the $n - \sqrt{n} \sum_{j=i+1}^k 2^{i-1} = 2 m$ elements in $S \setminus \bigcup_{j=i+1}^k T_j$.
Notice also that this is the only step in the algorithm that is randomized. 
To obtain a version of \rifflesort that does not require any external source of randomness, i.e., that depends only on the input sequence and on the comparison results,
we will generate such a partition by exploiting the intrinsic random nature of the comparison results.

We start by showing that, with probability at least $1-\frac{1}{n^3}$, the partition $T_0, \dots, T_k$ can be found in $O(n)$ time using only $O(n)$ random bits.
To this aim it suffices to show that, given a set $A$ of $2N$ elements, a random set $B \subset A$ of $N$ elements can be 
found in $O(N)$ time using $O(N)$ random bits, with probability at least $1-\frac{1}{N^7}$.
We construct $B$ as follows:
\begin{itemize}
\item For each element of $A$ perform a coin-flip. Let $C$ be the set of all the elements whose corresponding coin flip is ``heads''.

\item If $|C|=N$, return $B = C$. Otherwise, if $|C|<N$, randomly select a set $D$ of $N-|C|$ elements from $A \setminus C$ and return $B = C \cup D$.
Finally, if $|C|>N$, randomly select a set $D$ of $|C|-N$ elements from $C$ and return $B= C \setminus D$.
\end{itemize}

Standard techniques show that this method of selecting $B$ is unbiased, i.e., all the sets $B \subset A$ of $N$ elements are returned with equal probability. We therefore move the proof of the following lemma to the Appendix.
\begin{lemma}
	\label{lemma:random_selection_unbiased}
	For any set $X \subset A$ of $N$ elements, $\Pr(B = X) =\binom{2N}{N}^{-1}$. 
\end{lemma}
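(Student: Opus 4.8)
The plan is to show that the described three-case procedure produces each $N$-subset $X \subset A$ with the same probability $\binom{2N}{N}^{-1}$, by conditioning on the size of the coin-flip set $C$ and partitioning $X$ into the part that was "hit" by the coin flips and the part that was filled in (or removed) by the auxiliary uniform selection of $D$. First I would fix a target set $X$ with $|X| = N$ and condition on the outcome $|C| = j$ for a generic $j \in \{0, 1, \dots, 2N\}$; by symmetry of the coin flips, $\Pr(|C| = j) = \binom{2N}{j} 2^{-2N}$, and conditioned on $|C| = j$ every $j$-subset of $A$ is equally likely to be $C$. The key observation is that $B = X$ forces a specific relationship between $C$ and $X$: if $j \le N$ we need $C \subseteq X$ (the coins can only select elements we want to keep) and then $D = X \setminus C$ must be the set chosen from $A \setminus C$; if $j > N$ we need $X \subseteq C$ and $D = C \setminus X$ must be the set removed from $C$.

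Next I would compute $\Pr(B = X \mid |C| = j)$ in each case. For $j \le N$: the number of $j$-subsets $C$ with $C \subseteq X$ is $\binom{N}{j}$, each occurring with probability $\binom{2N}{j}^{-1}$ given $|C|=j$; given such a $C$, the set $D$ is a uniformly random $(N-j)$-subset of the $2N - j$ elements of $A \setminus C$, so the probability that $D = X \setminus C$ is exactly $\binom{2N-j}{N-j}^{-1}$. Hence $\Pr(B = X \mid |C| = j) = \binom{N}{j}\binom{2N}{j}^{-1}\binom{2N-j}{N-j}^{-1}$. A short manipulation of binomial coefficients should simplify this to $\binom{2N}{N}^{-1}$, independently of $j$. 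The symmetric computation for $j > N$, with the roles of "keep" and "remove" swapped (now $D$ is a uniform $(j-N)$-subset of the $j$ elements of $C$), gives $\binom{N}{2N-j}\binom{2N}{j}^{-1}\binom{j}{j-N}^{-1}$, which by the same kind of identity also reduces to $\binom{2N}{N}^{-1}$.

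Finally, since $\Pr(B = X \mid |C| = j) = \binom{2N}{N}^{-1}$ for every value of $j$, averaging over $j$ against the distribution of $|C|$ (which sums to $1$) yields $\Pr(B = X) = \binom{2N}{N}^{-1}$, as claimed. The main obstacle — really the only non-routine point — is checking that the two conditional expressions genuinely collapse to the $j$-free value $\binom{2N}{N}^{-1}$; this is where one must be careful, and it is most transparently verified by rewriting everything in terms of factorials (e.g.\ $\binom{N}{j}\big/\big(\binom{2N}{j}\binom{2N-j}{N-j}\big) = \frac{N!\,N!}{(2N)!}$ after cancellation) rather than by juggling binomial identities symbolically. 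Everything else is bookkeeping over the three cases of the procedure.
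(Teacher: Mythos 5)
Your proposal is correct and takes essentially the same approach as the paper: condition on $|C|=j$, observe that conditionally on its size $C$ is uniform, split into the $j\le N$ and $j>N$ cases, and reduce the conditional probability to a ratio of binomial coefficients that collapses to $\binom{2N}{N}^{-1}$. The one organizational refinement you make is noticing \emph{before} averaging that $\Pr(B=X\mid |C|=j)$ is constant in $j$, so the marginalization is immediate; the paper instead expands everything into a single sum over the size of $C$ and invokes $\sum_{j=0}^{2N}\binom{2N}{j}=2^{2N}$ at the very end, which is the same computation slightly rearranged.
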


We now show an upper bound on the number of random bits required.
\begin{lemma}
	For sufficiently large values of $N$, the number of random bits required to select $B$ u.a.r. is at most $2N$ with probability at least $1 - N^{-7}$.
\end{lemma}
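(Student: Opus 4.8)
The plan is to bound the number of random bits used by the selection procedure for $B$ in two parts: the coin-flips, and the cost of the rejection-based uniform selection of the correction set $D$.

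First I would handle the coin-flip phase. Each of the $2N$ elements of $A$ requires exactly one coin flip, so this phase deterministically consumes $2N$ bits. That already looks like it meets the budget on the nose, so the real content of the lemma is showing the second phase (selecting $D$) needs only $o(N)$ additional bits with high probability. Wait — that means I actually need a slightly more careful accounting: the statement says "at most $2N$", so I should set up the procedure to use, say, $(2-\epsilon)N$ bits for the coin flips is not an option since we need one flip per element; instead I will note that the intended reading is that the coin flips are the dominant cost and the correction costs $O(\sqrt{N}\log N)$ bits w.h.p., and then observe that one can trivially shave the coin-flip count (e.g. by flipping only $2N - N^{0.9}$ elements and forcing the rest, or simply by absorbing the lower-order term into a looser constant in a restated "$3N$" bound). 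The cleanest route: prove the correction phase uses $O(\sqrt{N}\log N) = o(N)$ bits with probability $\ge 1 - N^{-7}$, and then the total is $2N + o(N)$, which I will phrase so that it fits the claimed bound for large $N$ — if the paper truly wants exactly $2N$, the coin-flip count must be reduced by a sub-linear amount, which is harmless.

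Next, the size of the correction set. Let $C$ be the heads-set; then $|C|$ is $\mathrm{Binomial}(2N, 1/2)$, so $\E|C| = N$ and $\big||C| - N\big| \le \sqrt{N}\,\ln N$ with probability at least $1 - 2e^{-\ln^2 N/2} \ge 1 - N^{-7}$ for large $N$, by a Chernoff/Hoeffding tail bound. Hence with this probability the correction set $D$ has size $|D| = \big||C| - N\big| \le \sqrt{N}\,\ln N$.

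Finally, the bit-cost of selecting $D$ uniformly at random. To pick $|D|$ elements from a set of size at most $2N$ without replacement, it suffices to draw $|D|$ uniform indices in $\{1,\dots,2N\}$ (resampling on collisions), and each such index costs $\lceil \log(2N)\rceil$ bits per attempt; since $|D| \le \sqrt{N}\ln N \ll N$, the collision probability per draw is at most $|D|/(2N) \le 1/2$, so by a further easy tail bound the expected and (with probability $\ge 1 - N^{-7}$) the actual number of index-draws is $O(|D|)$, for a total of $O(\sqrt{N}\,\ln N \cdot \log N) = O(\sqrt{N}\log^2 N) = o(N)$ bits. Combining via the union bound over the two bad events (large $|C|$ deviation, and excessive rejections) gives failure probability at most $2N^{-7} \le N^{-7}$ after adjusting constants, and the total bit count $2N + o(N) \le 2N$ once we account the coin flips slightly more frugally — the main obstacle, and the only thing needing care, is exactly this last bookkeeping step of squeezing the lower-order correction term under the stated "$2N$" rather than a cosmetically larger "$cN$" bound, which I would resolve by having the procedure flip coins for only $2N - \Theta(N^{0.9})$ of the elements and deterministically fixing the remainder before applying the correction step.
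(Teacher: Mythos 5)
Your decomposition and estimates are essentially the same as the paper's: one coin flip per element of $A$ to build $C$, a Hoeffding bound giving $\big||C|-N\big| = O(\sqrt{N}\,\mathrm{polylog}\,N)$ with probability $1-N^{-\Theta(1)}$, and then $O(\sqrt{N}\,\mathrm{polylog}\,N)$ bits via rejection sampling to choose $D$, combined by a union bound. Your worry about the constant is also legitimate and, in fact, mirrors an internal inconsistency in the paper: the paper's own proof begins ``$C$ can be built using $3N$ random bits,'' which already exceeds the lemma's stated ``$2N$'' budget before $D$ is even selected, so the stated constant should really be read as $3N$ (or $2N + o(N)$, or just $O(N)$ --- the downstream argument only needs $O(n)$ bits in total and invokes a $4n$ bound). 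The one concrete misstep in your sketch is the proposed repair of flipping coins for only $2N - \Theta(N^{0.9})$ elements and forcing the rest: that would destroy the exchangeability that makes the selection unbiased, contradicting Lemma~\ref{lemma:random_selection_unbiased}, so it cannot be used; the harmless fix is simply to relax the stated constant rather than to reduce the coin-flip count.
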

\begin{proof}
	Clearly, $C$ can be built using $3N$ random bits.
    Let $m=|C|$, since $E[|C|]=N$, by Hoeffding's inequality we have:
    \[
    	\Pr(| m - N | \ge 2 \sqrt{N \ln N} ) \le 2 e^{-8 \ln N} \le 2N^{-8}.
    \]
    
    This implies that, with probability at least $1-2N^{-8}$, the set $C$ contains at most $2 \sqrt{N \ln N}$ elements. Hence, $O(\sqrt{N} \cdot \mathrm{polylog} N)$ random bits suffice to select $D$ (using, e.g., a simple rejection strategy), and therefore $B$, with probability at least $1-N^{-8}$. The claim follows by using the union bound.
 \end{proof}

From the above lemmas, it immediately follows that all the sets $T_0, \dots, T_k$ can be constructed in time $O(\sum_{i=0}^k |T_i|) = O(n)$ using at most $4n$ random bits with probability at least $1- n^{-\frac{7}{2}}  \cdot \log N \ge 1 - n^{-3}$, for sufficiently large values of $n$.

\subsection{Derandomized RiffleSort}

	As shown in \cite{newwindowsort}, it is possible to simulate ``almost-fair'' coin flips by xor-ing together a sufficiently large number of comparison results. Indeed, we can associate the two possible results of a comparison with the values $0$ and $1$, so that each comparison behaves as a Bernoulli random variable whose parameter is either $p$ or $1-p$.
	We can then use the following fact: let $c_1, \dots, c_k$ be $k = \Theta(\log n)$ independent Bernoulli random variables such that $P(c_i=1) \in \{p, 1-p\} \; \forall i=1,\dots,k$, then $| P(c_1 \oplus c_2 \oplus \dots \oplus c_k = 0) - \frac{1}{2}| \le \frac{1}{n^4}$.

	Therefore, if we consider the set $A$ containing the first $9 k$ elements from $S$ and we compare each element in $A$ to all the elements in $S \setminus A$, we obtain a collection of $9 k (n - k) \ge 8 k n$ comparison results (for sufficiently large values of $n$) from which we can generate $8n$ almost-fair coin flips.
	With probability at least $1 - \frac{8 k n }{n^4} - n^{-3} > 1 - \frac{1}{n^2}$ these almost-fair coin flips behave exactly as unbiased random bits, and they suffice to select a partition $T_0, \dots, T_k$ of $S \setminus A$.\footnote{This is true even if up to $|S \setminus A| - 1$ additional $+\infty$ elements are added to $S \setminus A$, as described in Section~\ref{sec:rifflesort}}.
	It is now possible to use \rifflesort on $S \setminus A$ to obtain a sequence $S'$ having maximum dislocation $d=O(\log n)$ and linear dislocation $O(n)$ (from Lemma~\ref{lemma:rifflesort_runtime} and Lemma~\ref{lemma:rifflesort_dislocation} this requires time $O(n \log n)$ and succeeds with probability at least $1- |S \setminus A|^{-\frac{3}{2}} > 1 - 3 n^{-\frac{3}{2}}$ since $|S \setminus A| \ge \frac{n}{2}$).
	
	What is left to do is to reinsert all the elements of $A$ into $S'$ without causing any asymptotic increase in the total and in the maximum dislocation. While one might be tempted to use the result of Section~\ref{sec-introduction}, this is not actually possible since the errors between the elements in $A$ and the elements in $S'$ now depend on the permutation $S'$.
	However, a simple (but slower) strategy, which is similar to the one used in \cite{newwindowsort}, works even when the sequence $S'$ is adversarially chosen as a function of the errors, as long as its maximum dislocation is at most $d$.
	Suppose that we have a guess $\tilde{r}$ on $\rank(x, S')$, we can determine whether $\tilde{r}$ is a good estimate on $\rank(x, S')$ 
	by comparing $x$ with all the elements in positions from $\tilde{r} - cd$ to $\tilde{r} + cd -1$ in $S'$ and counting the number $m$ of \emph{mismatches}: a mismatch is an element $y$ such that either (i) $\pos(y, S') < \tilde{r}$ and $y > x$, or (ii) $\pos(y, S') \ge \tilde{r}$ and $y < x$.
	Suppose that our guess of $\tilde{r}$ is much smaller than the true rank of $x$, say $\tilde{r} < \rank(x, S') - c d$ for a sufficiently large constant $c$, then
all the elements $y$ such that $\tilde{r} + d \le \rank(y, S') < \tilde{r} + (c-1)d$
are in $\{ z \in S' \, : \, \tilde{r} \le \pos(z, S') < \tilde{r} + cd \}$.
Since $x \succ y$, we have that the observed comparison result is $x > y$ with probability at least $1-p$, and 
hence the expected number of mismatches $m$ is at least $(c-2)d(1-p)$, and a Chernoff bound can be used to show that with probability $1-\frac{1}{n^4}$,  $m$ will exceed $\frac{1}{2}(c-2)d(1-p) \ge \frac{1}{3} c d$.
A symmetric argument holds for the case in which $\tilde{r} \ge \rank(x, S') + c d$.
On the contrary, if $\rank(x, S') - d \le \tilde{r} < \rank(x, S') + d$, 
all the elements $y$ such that
either $\tilde{r} - (c-2) d \le \rank(y, S') < \tilde{r} -2d$
or $\tilde{r} + 2 d \le \rank(y, S') < \tilde{r} - (c-2) d$
are in the correct relative order w.r.t. $x$ in $S'$.
This implies that the expected number of mismatches $m'$ between $x$ and all the elements $y$ will be at most $2(c-4)d p$, that $m' \le 4(c-4)dp$ with probability at least $1-\frac{1}{n^4}$, which implies that $m \le m' + 4d \le 4(c-4)d p + 4d < \frac{1}{3}c d$ with at least the same probability.

Therefore, to compute a $r_x$ satisfying $|r_x - \rank(x, S')| = O(d)$, it suffices to count the number of mismatches for $\tilde{r} = 0, 2d, 4d, \dots $ and to select the value of $\tilde{r}$ minimizing their number. 
The total time required to to compute all $r_x$ for $x \in A$ is therefore $|A| \cdot O(\frac{n}{d} \cdot d) = O(n \log n)$, and the success probability is at least $1 - O(\frac{n}{d} \cdot |A|) \cdot \frac{1}{n^4} \ge 1 - \frac{1}{n^2}$, for sufficiently large values of $n$.
Combining this with the success probability of \rifflesort, we obtain an overall success probability of at least $1 - \frac{1}{n}$.
	 Finally,  since the set $A$ only contains $O(\log n)$ elements, simultaneously reinserting them in $S'$ affects the maximum dislocation of $S'$ by at most an $O(\log n)$ additive term. Moreover, their combined contribution to the total dislocation is at most $O(\log^2 n)$.

\clearpage
\appendix

\section{Omitted Proofs}

\subsection*{Proof of Lemma~\ref{lemma:draw_no_long_monocolor}}

\begin{proof}[\unskip\nopunct]
Let $b_i$ be the color of the $i$-th drawn ball.

	We consider the sequence of drawn balls and,
	for any position $n$, we bound the distance between $n$-th ball and the position of the $k$-th closest white ball. Suppose that $n \le M = \frac{N}{2}$, as otherwise we can apply similar arguments by considering the drawn balls in reverse order. We distinguish two cases.
	
	If $n \le \frac{M}{4}$ then, for any $j \le \frac{M}{4}$, the probability that $b_{n+j}$ is white, regardless of the colors of the other balls in $b_n, \dots, b_{M/2}$, is at least:
	\[
	\frac{M-(n+j)}{2M-(n+j)}
	= \frac{1}{2} - \frac{n+j}{4M-2(n+j)}
	\ge \frac{1}{2} - \frac{M}{2 \cdot 3M} = \frac{1}{3} > \frac{1}{16}.
	\]
	
	If $\frac{M}{4} \le n \le M$, then let $X$ be the number of white balls in $\{b_1, \dots, b_n \}$.
	Since $X$ is distributed as a hypergeometric random variable of parameters $N$, $M$, and $n$ we have that $\E[X] = \frac{nM}{N} = \frac{n}{2}$.
	
	By using the tail bound (see, e.g., \cite{skala2013hypergeometric}) $\Pr(X \ge \E[X] + tn) \le e^{-2t^2 n}$ for $t \ge 0$, we obtain (for sufficiently large values of $N$):
	\begin{align*}    
	\Pr\left(X \ge \frac{3M}{4}\right) &=
	\Pr\left(X \ge \frac{M}{2} + \frac{M}{4}\right) \le
	\Pr\left(X \ge \frac{n}{2} + 2 \sqrt{n \log n}\right) \\
	&\le e^{-8 \log n} < n^{-8} \le
	2^{24} N^{-8}.
	\end{align*}
	
	Assume now that $X < \frac{3M}{4}$, which happens with probability at least $1- \frac{2^24}{N^8}$.
	In this case, for any $j \le \frac{M}{8}$, the probability that $b_{n+j}$ is white, regardless of the colors of the other balls in $b_n, \dots, b_{(9/8) M}$, is at least:
	\[
	\frac{M-(\frac{3M}{4}+j)}{2M-(n+j)}
	\ge \frac{M-\frac{7M}{8}}{2M}
	= \frac{1}{16}.
	\]
	
	Therefore, in both the first case and in the second case, as long as our assumption holds, the probability that at most $k$ balls in $b_{n}, \dots, b_{n+100k}$ are white is at most:
	\begin{multline*}    
	\sum_{j=0}^{k} \binom{100k}{j} \left( \frac{1}{16} \right)^j \left( \frac{15}{16} \right)^{100k-j} 
	\le (k+1)  \binom{100k}{k} \left(\frac{15}{16} \right)^{100k}   \\
	\le (k+1)  \left(\frac{100ek}{k}\right)^k \left(\frac{15}{16}\right)^{100k} 
	\le (k+1)  \left(100e \left(\frac{15}{16}\right)^{100} \right)^k 
	< \frac{k+1}{2^k},
	\end{multline*}    
	
	where we used the inequality $\binom{\eta}{\kappa} \le \left( \frac{e \eta}{\kappa} \right)^\kappa$.
	
	By using the union bound on all $n \le M$ and on the event  $X < \frac{3M}{4}$ whenever $\frac{M}{4} \le n \le M$, we have can upper bound the sought probability as:
	\[
	N \left( \frac{k+1}{2^k} + 2^{24} N^{-8} \right)
	\le N \left( \frac{N}{N^9} + 2^{24} N^{-8} \right)
	\le  (1+2^{24}) N^{-7} \le N^{-6},
	\]
	where the last inequality holds for sufficiently large values of $N$.
\end{proof}

\subsection*{Proof of Lemma~\ref{lemma:random_selection_unbiased}}
\begin{proof}[\unskip\nopunct]
	For each $k \in [ -N, N ]$, we define a collection $\mathcal{Y}_k$ of sets as follows:
    If $k < 0$, $\mathcal{Y}_k$ contains all the sets $Y \subset X$ such that $| X \setminus Y | = |k|$.
    If $k \ge 0$, $\mathcal{Y}_k$ contains all the sets $Y \supseteq X$ such that $| Y \setminus X | = |k|$.
    Notice that, depending on the value of $k$, $\mathcal{Y}_k$ can be obtained by either selecting $|k|$ elements to remove from $X$, or by selecting $k$ elements to add to $X$ from $A \setminus X$. Therefore, $|\mathcal{Y}_k| = \binom{N}{|k|}$.
    Defining $m=|C|$, we have:
	\begin{align*}
		\Pr(B=X) &= \sum_{k=-N}^N \Pr(m=N+k) \cdot \Pr(B = X \mid m=N+k) \\
  		   &=  2^{-2N} \sum_{k=-N}^N \binom{2N}{N+k} \sum_{Y \in \mathcal{Y}_k} \Pr(C = Y \mid m=N+k) \Pr(B=X \mid C = Y) \\
			&=  2^{-2N} \sum_{k=-N}^N \binom{2N}{N+k} \sum_{Y \in \mathcal{Y}_k} \frac{1}{\binom{2N}{N+k}}  \frac{1}{\binom{N+|k|}{|k|}} 
            =  2^{-2N} \sum_{k=-N}^N \frac{|\mathcal{Y}_k|}{\binom{N+|k|}{|k|}} \\
            &= 2^{-2N} \sum_{k=-N}^N \frac{\binom{N}{|k|}}{\binom{N+|k|}{|k|}} 
			= 2^{-2N} \sum_{k=-N}^N \frac{N! N! |k|!}{ (N+|k|)! (N-|k|)! |k|!} \\
            & = 2^{-2N} \sum_{k=-N}^N \frac{N! N!}{ (N+k)! (N-k)!}
            = 2^{-2N} \sum_{j=0}^{2N} \frac{N! N!}{ j! (2N-j)!}\\
            &= 2^{-2N} \frac{N! N!}{(2N)!} \sum_{j=0}^{2N} \frac{(2N)!}{ j! (2N-j)!} = \frac{N! N!}{(2N)!} = \frac{1}{\binom{2N}{N}}.
	\end{align*}
\end{proof}

\clearpage

\bibliographystyle{plain}
\bibliography{bibliography,references}

\end{document}